\def\arxivversion{}
\def\sysname{Bole}
\def\papertitle{\sys: Efficient Tree Speculation for Hybrid-Attention Language Models}

\def\motivationfigdir{figures/motivation/output-arxiv}
\def\evaluationfigdir{figures/evaluation/output-arxiv}
\documentclass[10pt,conference]{IEEEtran}
\IEEEoverridecommandlockouts

\usepackage{cite}
\usepackage{amsmath,amssymb,amsfonts,amsthm}
\usepackage{graphicx}
\usepackage{float}
\usepackage{booktabs}
\usepackage{tabularx}
\usepackage{textcomp}
\usepackage{xcolor}
\usepackage{xspace}
\usepackage{enumitem}
\usepackage{microtype}
\usepackage[hyphens]{url}
\usepackage{fancyhdr}
\usepackage{hyperref}

\ifdefined\arxivversion
  \renewcommand{\footnoterule}{%
    \kern-3pt
    \hrule width 0.4\columnwidth
    \kern2.6pt}
\fi

\providecommand{\sysname}{Twig}
\providecommand{\papertitle}{%
  \sys: Parallel and Memory-Efficient Tree Speculative Decoding for
  Hybrid-Attention LLMs}

\providecommand{\motivationfigdir}{figures/motivation/output}
\providecommand{\evaluationfigdir}{figures/evaluation/output}

\newtheoremstyle{compactplain}
  {3.5pt plus 1pt minus 1pt}
  {3.5pt plus 1pt minus 1pt}
  {\itshape}
  {}
  {\bfseries}
  {.}
  {0.5em}
  {}
\theoremstyle{compactplain}
\newtheorem{theorem}{Theorem}
\newtheorem{lemma}{Lemma}

\makeatletter
\renewenvironment{proof}[1][\proofname]{\par
  \pushQED{\qed}%
  \normalfont \topsep3\p@\@plus1\p@\@minus1\p@\relax
  \trivlist
  \item[\hskip\labelsep\itshape #1\@addpunct{.}]\ignorespaces
}{%
  \popQED\endtrivlist\@endpefalse
}
\makeatother

\newenvironment{compactequation}{%
  \setlength{\abovedisplayskip}{1.1ex plus 0.75pt minus 0.4pt}%
  \setlength{\belowdisplayskip}{1.1ex plus 0.75pt minus 0.4pt}%
  \setlength{\abovedisplayshortskip}{0.25pt plus 1pt}%
  \setlength{\belowdisplayshortskip}{1.1ex plus 0.75pt minus 0.4pt}%
  \begin{equation}%
}{%
  \end{equation}%
}

\newcommand{\Anc}{\mathcal{A}}
\newcommand{\Spre}{S_{\mathrm{pre}}}
\newcommand{\ind}{\mathbf{1}}
\newcommand{\sys}{\sysname\xspace}

\newcommand{\figref}[1]{Fig.~\ref{#1}}
\newcommand{\figsref}[2]{Figs.~\ref{#1} and~\ref{#2}}
\newcommand{\tabref}[1]{Table~\ref{#1}}
\newcommand{\eqnref}[1]{Eq.~\eqref{#1}}
\newcommand{\secref}[1]{\S~\ref{#1}}
\newcommand{\secrefs}[2]{Sections~\ref{#1}--\ref{#2}}

\newcommand{\runinhead}[1]{%
  \par\vspace{1.325pt}\noindent\textbf{#1}\enspace\ignorespaces}

\newcommand{\hpcayear}{2027}

\newcommand{\hpcasubmissionnumber}{814}
\title{\papertitle}

\newcommand{\hpcaauthors}{%
  Li Wang\IEEEauthorrefmark{1}\IEEEauthorrefmark{2},
  Yi Su\IEEEauthorrefmark{1},
  Xiabao Wu\IEEEauthorrefmark{2},
  Chiran You\IEEEauthorrefmark{1},
  Yongchao Liu\IEEEauthorrefmark{2},
  Zhan Qiu\IEEEauthorrefmark{2},
  Juelu Zhang\IEEEauthorrefmark{2},
  Jiajun Zheng\IEEEauthorrefmark{2},\\
  Fangxin Liu\IEEEauthorrefmark{3},
  Jie Zhang\IEEEauthorrefmark{4},
  Chen Tian\IEEEauthorrefmark{1},
  and Chengying Huan\IEEEauthorrefmark{1}%
  \thanks{Corresponding authors: Chengying Huan and Yongchao Liu.}%
}
\newcommand{\hpcaaffiliation}{%
  \IEEEauthorrefmark{1}State Key Laboratory for Novel Software Technology,
  Nanjing University
  \quad \IEEEauthorrefmark{2}Ant Group\\
  \quad \IEEEauthorrefmark{3}Shanghai Jiao Tong University
  \quad \IEEEauthorrefmark{4}Peking University%
}
\newcommand{\hpcaemail}{%
  \footnotesize
  \{lwang.cs,suyi\}@smail.nju.edu.cn,
  wuxiabao.wxb@antgroup.com,
  231840050@smail.nju.edu.cn\\
  \{yongchao.ly,qiuzhan.qz,zhangjuelu.zjl\}@antgroup.com,
  zhengjiajun@gmail.com,
  liufangxin@sjtu.edu.cn,
  jiez@pku.edu.cn,
  \{tianchen,huanchengying\}@nju.edu.cn%
}

\author{
  \ifdefined\arxivversion
    \IEEEauthorblockN{\hpcaauthors{}}
      \IEEEauthorblockA{
        \hpcaaffiliation{} \\
        \hpcaemail{}
      }
  \else
  \ifdefined\hpcacameraready
    \IEEEauthorblockN{\hpcaauthors{}}
      \IEEEauthorblockA{
        \hpcaaffiliation{} \\
        \hpcaemail{}
      }
  \else
    \IEEEauthorblockN{\normalsize{HPCA \hpcayear{} Submission
      \textbf{\#\hpcasubmissionnumber{}}} \\
      \IEEEauthorblockA{
        Confidential Draft \\
        Do NOT Distribute!!
      }
    }
  \fi 
  \fi
}

\fancypagestyle{camerareadyfirstpage}{%
  \fancyhead{}
  
  \fancyhead[C]{
    \ifdefined\aeopen
    \parbox[][12mm][t]{13.5cm}{\hpcayear{} IEEE International Symposium on High-Performance Computer Architecture (HPCA)}    
    \else
      \ifdefined\aereviewed
      \parbox[][12mm][t]{13.5cm}{\hpcayear{} IEEE International Symposium on High-Performance Computer Architecture (HPCA)}
      \else
      \ifdefined\aereproduced
      \parbox[][12mm][t]{13.5cm}{\hpcayear{} IEEE International Symposium on High-Performance Computer Architecture (HPCA)}
      \else
      \parbox[][0mm][t]{13.5cm}{\hpcayear{} IEEE International Symposium on High-Performance Computer Architecture (HPCA)}
    \fi 
    \fi 
    \fi 
    \ifdefined\aeopen 
      \includegraphics[width=12mm,height=12mm]{ae-badges/open-research-objects.pdf}
    \fi 
    \ifdefined\aereviewed
      \includegraphics[width=12mm,height=12mm]{ae-badges/research-objects-reviewed.pdf}
    \fi 
    \ifdefined\aereproduced
      \includegraphics[width=12mm,height=12mm]{ae-badges/results-reproduced.pdf}
    \fi
  }
  \fancyfoot[C]{}
}
\begin{document}
\maketitle

\ifdefined\arxivversion
  \thispagestyle{empty}
  \pagestyle{empty}
\else
\ifdefined\hpcacameraready 
  \thispagestyle{camerareadyfirstpage}
  \pagestyle{empty}
\else
  \thispagestyle{plain}
  \pagestyle{plain}
\fi
\fi

\newcommand{\hpcaheight}{0mm}
\ifdefined\eaopen
\renewcommand{\hpcaheight}{12mm}
\fi


\begin{abstract}
Hybrid-attention large language models combine full attention with recurrent linear attention to reduce long-context inference costs, yet their autoregressive decoding remains memory-bound. Tree speculative decoding offers an attractive acceleration path, but existing tree-speculation systems are designed around the key--value caches of full-attention models. On hybrid models, they traverse recurrent layers branch by branch and materialize a full state for every proposal node, causing verification latency and transient memory to scale poorly with tree and batch sizes. We present \sys, a kernel--runtime co-design that enables efficient tree speculation for hybrid-attention LLMs. \sys transforms the linear-attention recurrence into a tree-structured closed form and realizes it with a resource-efficient GPU kernel, verifying all proposal nodes in parallel and accelerating linear-attention tree verification by 3.4--7.7$\times$. It losslessly encodes speculative state updates as token-level factors and reconstructs only the state selected after sampling, reducing transient state memory by 82--99$\times$ and freeing GPU capacity for KV caches. Its integration into SGLang, a widely deployed production LLM serving engine, couples efficient state management with a batch-wide verification budget calibrated to the complete hybrid forward. Across four models, two GPU platforms, and diverse datasets, \sys delivers up to $4.72\times$ the offline decode throughput of autoregressive decoding and up to $2.03\times$ that of the strongest tree-speculative baseline. Under online agent workloads, it reduces TTFT and TPOT by up to $67.6\%$ and $49.9\%$, respectively, over the strongest tree-speculative baseline.
\end{abstract}

\section{Introduction}
\label{sec:introduction}

Large language models (LLMs) increasingly serve long-context~\cite{zhiqiang2026Strata} and agentic
workloads~\cite{michael2026agentix, yang2026cacheslide}, where decoding long responses is a major serving cost~\cite{kan2025nanoflow, amey2024sarathiserve}. Full
attention computes against and retains a key--value (KV) pair for every
preceding token~\cite{ruoyu2025mooncake}, so its computation, data movement, and memory footprint grow
with context length~\cite{shutian2026directkv, yi2026pat}. Models such as Qwen3.5~\cite{qwen2026qwen35} and Kimi Linear~\cite{kimi2025linear} instead interleave
full attention with recurrent linear attention~\cite{yang2025gateddeltanet}. Linear-attention layers summarize the prefix in a fixed-size state~\cite{pan2025Marconi}, while full-attention layers preserve global retrieval.
Hybrid attention thereby reduces the per-token serving cost~\cite{Jung2025hlx}, but generation
still advances through one target-model invocation per output token.

Speculative decoding~\cite{leviathan2023speculative, chen2023accelerating}
reduces these serial invocations by exploiting unused compute capacity during
memory-bound decoding~\cite{li2025eagle3, liu2026speculative}. Among
speculative decoding methods, tree speculation is particularly effective due
to its high average acceptance length: EAGLE-2 reports 4--5.5 accepted tokens
per drafting--verification cycle and up to $5\times$
speedup~\cite{li2024eagle2}. A lightweight drafter proposes a tree
of candidate tokens, and the target model verifies all nodes together in one
invocation. For full-attention models, an ancestor mask allows
each proposal node to attend to the committed prefix and its own ancestors, so
the target model can score the entire tree in one parallel
forward~\cite{miao2024specinfer, yao2025deft} before sampling commits an
accepted path. Accepted candidates amortize model-weight loading and invocation
overheads across multiple tokens while preserving the target model's
output distribution~\cite{leviathan2023speculative}.

Tree speculation is therefore a promising approach to accelerating
hybrid-attention LLMs. However, realizing this potential requires both an
efficient GPU \textbf{kernel} for executing proposal trees over recurrent linear
attention and a serving \textbf{runtime} that manages speculative recurrent states and
tree work across requests. Existing systems provide neither capability for
hybrid-attention LLMs. We target this kernel--runtime gap.

\textbf{Existing tree-speculation kernels cannot efficiently parallelize
modern hybrid-attention verification.} SpecInfer~\cite{miao2024specinfer} and DeFT~\cite{yao2025deft} flatten a proposal tree
into an ancestor-masked sequence and score all nodes in one full-attention
forward. These full-attention methods do not transfer directly to recurrent
linear attention because its recurrent formulation exposes no attention matrix
to mask. Instead, it encodes tree dependencies through parent-to-child state
propagation.
STree~\cite{wu2025stree} composes diagonal state-space model (SSM) transitions over proposal
trees, but this composition does not apply to modern gated
delta recurrences~\cite{yang2025gateddeltanet, qwen2026qwen35}.

\textbf{Existing tree-speculation runtimes inefficiently manage hybrid
recurrent states and execution costs.} Existing systems optimize draft policies
and tree construction~\cite{hu2026gto}, tree-aware serving
pipelines~\cite{miao2024specinfer, guan2025yggdrasil}, and service-level
objective (SLO)-aware budgeting~\cite{li2026adaserve}. In particular,
SGLang~\cite{zheng2024sglang, GitHubsglang} provides a production
tree-speculation runtime, while AdaServe~\cite{li2026adaserve} adapts tree
budgets to heterogeneous SLOs. Both are designed around the append-only KV
caches of full-attention models.
Hybrid runtimes must manage the large, transient recurrent states
generated for candidate branches during tree verification~\cite{GitHubsglang}
and size tree budgets across heterogeneous target
operators~\cite{Jung2025hlx, li2026adaserve}. Existing systems do neither
efficiently.

\begin{figure*}[t]
  \centering
  \begin{minipage}[t]{0.315\textwidth}
    \centering
    \includegraphics[width=\linewidth]{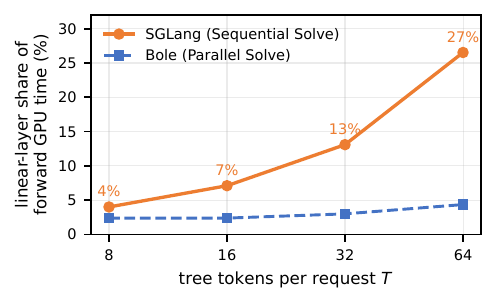}\\[-0.5ex]
    {\small (a) Linear-attention verification time grows rapidly with tree size.}
  \end{minipage}\hfill
  \begin{minipage}[t]{0.315\textwidth}
    \centering
    \includegraphics[width=\linewidth]{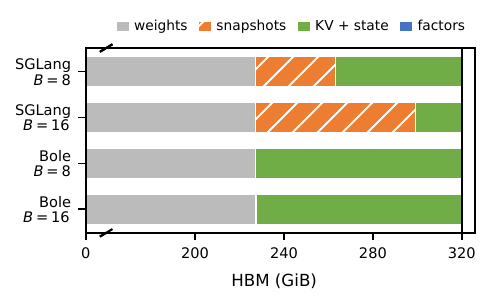}\\[-0.5ex]
    {\small (b) Per-node snapshots consume memory and reduce serving capacity.}
  \end{minipage}\hfill
  \begin{minipage}[t]{0.315\textwidth}
    \centering
    \includegraphics[width=\linewidth]{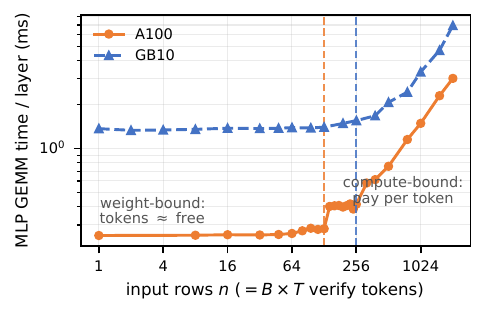}\\[-0.5ex]
    {\small (c) The efficient token range is hardware-specific.}
  \end{minipage}
  \caption{Motivation for hybrid-aware tree verification: serial recurrence,
  snapshot memory overhead, and hardware-dependent capacity. Measurements use
  Qwen3.5-9B on A100 in (a), Qwen3.5-122B-A10B on $4\times$A100 in (b), and
  Qwen3.5-9B on A100 and GB10 in (c).}
  \label{fig:motivation}
\end{figure*}

In practice, these kernel and runtime limitations manifest as three system
bottlenecks, quantified in \figref{fig:motivation} and detailed in
\secref{sec:motivation}.
\textbf{(1) Serial recurrence.} Parent-to-child execution grows with the
proposal tree, increasing linear attention's share of the target forward from
4\% to 27\% as the tree expands from 8 to 64 nodes.
\textbf{(2) State explosion.} Full-state snapshots grow with both tree and
batch sizes, consuming 36~GiB for eight 32-node requests on
Qwen3.5-122B-A10B. This transient storage displaces KV-cache capacity and
limits batching, ultimately reducing serving throughput.
\textbf{(3) Variable verification capacity.} The number of proposal tokens
that can be verified efficiently changes with the GPU and the current batch.

Overcoming these limitations presents three challenges.
\textbf{(1) Tree-coupled recurrent dependencies.} Each proposal node consumes
its parent's state, imposing a topological critical path that an ancestor mask
cannot remove. Parallel execution must eliminate this order while preserving
the exact recurrence and mapping efficiently onto the GPU.
\textbf{(2) State divergence before branch selection.} Candidate branches
produce distinct large states before the runtime knows which path will be
accepted. Exact recovery thus appears to require materializing a complete state
for every node.
\textbf{(3) Coupled, configuration-dependent cost.} Full-attention, recurrent
linear-attention, and dense operators scale with different request and batch
properties. Because they share one target forward, their joint efficient
operating point depends on the GPU, KV lengths, and tree composition,
precluding a fixed per-request capacity. Together, these challenges span the
linear-attention algorithm, GPU kernel, and serving runtime.

We present \textbf{\sys}, a kernel--runtime co-design that addresses these challenges
through three mechanisms. \textbf{(1) Parallel tree solver.} A tree-structured
factorization separates the committed state from interactions among proposal
tokens, yielding a structured linear system that a resource-efficient GPU
kernel solves for all nodes in parallel from one pre-tree state.
\textbf{(2) Factorized state lifecycle.} \sys retains one committed state and
represents candidate updates with token-scale factors. After sampling, it
reconstructs and commits only the accepted state, while rejected branches
require neither full recurrent states nor rollback.
\textbf{(3) Hardware-aware verification budget.} \sys profiles the efficient
verification capacity of each execution configuration and allocates this
batch-wide budget across request trees. We implement the complete design in
SGLang~\cite{zheng2024sglang}, covering tree construction, target-model
execution, sampling, and state commit.

In summary, our work makes the following contributions:
\begin{itemize}[leftmargin=1.25em,itemsep=1pt,topsep=2pt,parsep=0pt]
  \item We derive an exact tree-structured factorization of the
  linear-attention recurrence and develop a resource-efficient GPU kernel that
  verifies all proposal nodes in parallel, accelerating linear-attention tree
  verification by up to $7.7\times$.
  \item We develop a factorized state lifecycle that losslessly represents
  candidate updates with token-scale factors and reconstructs only the accepted
  state, reducing transient state memory by $82$--$99\times$.
  \item We develop a hardware-aware, batch-wide verification budget and
  integrate \sys into SGLang to support continuous batching, CUDA Graphs, and
  GPU-resident state commit.
  \item Across four models and two GPU platforms, \sys achieves up to
  $4.72\times$ the offline decode throughput of autoregressive decoding and
  $2.03\times$ that of the strongest tree-speculative baseline. On online
  agent workloads, \sys reduces time to first token (TTFT) and time per output
  token (TPOT) by up to
  $67.6\%$ and $49.9\%$, respectively, over the strongest tree-speculative
  baseline.
\end{itemize}

\section{Background and Motivation}

\subsection{Hybrid Attention Models and Gated-Delta Linear Attention}
\label{sec:background-hybrid}

Hybrid LLMs interleave full-attention and recurrent linear-attention layers,
combining long-range retrieval with a bounded recurrent state~\cite{pim2025Pimba}. Qwen3.5, for
example, repeats three Gated DeltaNet (GDN) layers followed by one gated
full-attention layer~\cite{qwen2026qwen35}. \figref{fig:hybrid-background} contrasts their history
representations: full attention retains a KV pair per prefix token and grows
with context length, whereas GDN summarizes the prefix in a fixed-size matrix.

\begin{figure}[H]
  \centering
  \includegraphics[width=\linewidth]{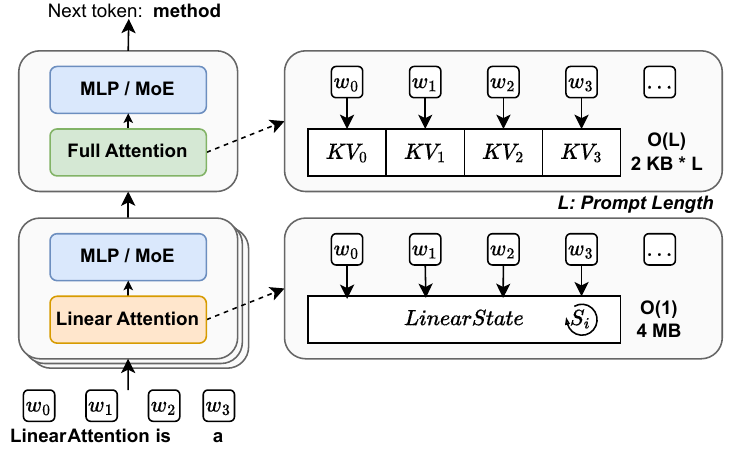}
  \caption{State representations in hybrid-attention models.}
  \label{fig:hybrid-background}
\end{figure}

A GDN layer applies a short causal depthwise convolution and input-dependent
projections and gates to form its recurrent operands. Its recurrent core has
$H$ independent \emph{state heads}, each maintaining a $d_k\times d_v$ matrix.
For one state head, GDN updates~\cite{yang2025gateddeltanet}
$S_t\in\mathbb R^{d_k\times d_v}$ as
\begin{compactequation}
\begin{aligned}
\widetilde S_t&=\gamma_tS_{t-1},&
u_t&=\beta_t(v_t-\widetilde S_t^{\top}k_t),\\
S_t&=\widetilde S_t+k_tu_t^{\top},&
o_t&=S_t^{\top}q_t,
\end{aligned}
\label{eq:gdn-background}
\end{compactequation}
where $q_t,k_t\in\mathbb R^{d_k}$, $v_t,u_t,o_t\in\mathbb R^{d_v}$, and
$\gamma_t>0$ and $\beta_t$ are decay and write gates. The in-place state $S_t$
summarizes the prefix through token $t$. Because $S_t$ depends on $S_{t-1}$,
ordinary decoding evaluates the recurrence in token order.

\subsection{Tree Speculative Decoding}
\label{sec:background-tree}

Autoregressive decoding ordinarily invokes the target model once per output
token~\cite{patel2025splitwise}. At small batches, each invocation is memory-bound: model-weight
transfers dominate while much of the GPU's arithmetic throughput remains idle~\cite{he2025papi}.
Tree speculative decoding exploits this headroom with a lightweight drafter,
such as multi-token prediction (MTP)~\cite{Gloeckle2024better} or
EAGLE~\cite{li2024eagle2}, that constructs a proposal tree, after which the target
scores all nodes together~\cite{miao2024specinfer, ye2025flashinfer}. Reusing each weight load across more token
computations makes verification latency grow much more slowly than the number
of verified tokens, so extra tokens are nearly free within the hardware's
efficient range~\cite{li2024specpim, zhang2026swiftspec}. Standard acceptance and correction sampling can then commit
multiple tokens without changing the target model's output distribution~\cite{chen2023accelerating}.

\begin{figure}[H]
  \centering
  \includegraphics[width=\linewidth]{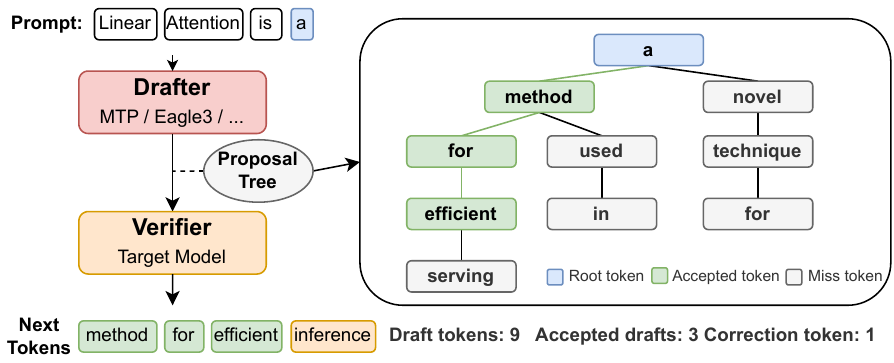}
  \caption{Tree speculative decoding and accepted-path commit.}
  \label{fig:tree-background}
\end{figure}

\figref{fig:tree-background} shows one round. Each proposal node is a draft
token, each root-to-node path is a candidate continuation, and branches reuse
their shared prefix. An \emph{ancestor mask} restricts each node to the
committed prefix and its own proposal ancestors, preventing leakage from
sibling branches. Sampling commits one accepted path and a target-sampled
correction token, then discards the other branches. The accepted-path length
determines per-round progress, and we report its mean as mean accepted tokens
(MAT).

\subsection{Limitations of Existing Tree Verification for Hybrid Attention}
\label{sec:motivation}

Existing work cannot efficiently support tree speculative decoding for
hybrid-attention models.  We next identify three limitations, using
\figref{fig:motivation} as empirical evidence.

\runinhead{L1: Linear-attention verification time grows rapidly with tree size.}
Each proposal node requires the linear-attention state produced by its parent.
Existing engines~\cite{GitHubsglang} must therefore traverse a $T$-node tree and apply
\eqnref{eq:gdn-background} one node at a time, creating $T$ sequential
recurrent steps even though all proposal tokens are available before the
target forward.  \figref{fig:motivation}(a) shows the resulting latency
growth: increasing the tree from 8 to 64 nodes raises the linear layers'
share of the target forward from 4\% to 27\%.

\runinhead{L2: Per-node state snapshots impose prohibitive memory overhead.}
Before sampling, the accepted path is unknown. Existing engines materialize a
complete post-update GDN state for every tree node because one
in-place state cannot represent diverging branches. Sampling commits only the
terminal accepted state and discards every rejected snapshot. Beyond HBM
bandwidth, these snapshots displace KV-cache and active-request capacity~\cite{kwon2023efficient, biao2024llumnix},
growing linearly with batch size $B$ and tree size $T$.
\figref{fig:motivation}(b) quantifies this batch-amplified loss when serving
Qwen3.5-122B-A10B on four A100-80GB GPUs. With $T=32$, snapshots consume
36~GiB at $B=8$ and 72~GiB at $B=16$, while \sys's factors remain below
1~GiB in both cases. For one request with a 100-node tree,
\tabref{tab:transient-memory} reports 4.7--14.1~GB for snapshots but only
57--151~MB for \sys's token-scale factors.

\begin{table}[H]
  \centering
  \caption{Transient state memory per request.}
  \label{tab:transient-memory}
  \small
  \setlength{\tabcolsep}{1.5pt}
  \begin{tabular}{@{}lrrrrr@{}}
    \toprule
    & \multicolumn{1}{c}{Linear} & \multicolumn{2}{c}{SGLang} &
      \multicolumn{2}{c}{\sys} \\
    \cmidrule(lr){3-4}\cmidrule(l){5-6}
    Model & \multicolumn{1}{c}{state} & $T{=}100$ & $T{=}200$ &
      $T{=}100$ & $T{=}200$ \\
    \midrule
    Qwen3.5-4B          & 48 MB  & 4.7 GB  & 9.4 GB  & 57 MB  & 113 MB \\
    Qwen3.5-9B          & 48 MB  & 4.7 GB  & 9.4 GB  & 57 MB  & 113 MB \\
    Qwen3.5-27B         & 144 MB & 14.1 GB & 28.1 GB & 151 MB & 302 MB \\
    Qwen3.5-122B-A10B   & 144 MB & 14.1 GB & 28.1 GB & 142 MB & 283 MB \\
    \bottomrule
  \end{tabular}
\end{table}

Both representations scale linearly with $T$, but their coefficients differ:
\tabref{tab:transient-memory} shows that \sys reduces the per-node transient
footprint by $82$--$99\times$ relative to full-state snapshots.

\runinhead{L3: Additional tree tokens are no longer near-free beyond a hardware-dependent threshold.}
The roofline model~\cite{zhang2025llmcompass, williams2009roofline} explains this latency threshold.  Consider the MLP portion
of target-model execution.  At small row counts, its GEMMs are
weight-bandwidth-bound, so adding verification rows incurs little additional
latency.  Beyond a hardware-dependent threshold, they become compute-bound and
latency grows approximately linearly with row count~\cite{zhao2024atom, li2025orches}.
\figref{fig:motivation}(c) places this transition near 128 rows on A100,
compared with 256 rows on GB10.  This component-level transition illustrates
why additional tree tokens are not indefinitely free.  For a complete hybrid
forward, the efficient capacity further depends on each request's KV length~\cite{kamath2025podattention} and
tree shape~\cite{chen2024sequoia, li2026adaserve}. Fixed per-request or platform-agnostic capacities are
unreliable.

\runinhead{Requirement.}
Together, these limitations call for a tree verifier that is parallel, incurs
low memory overhead, and automatically controls verification cost. Meeting
this requirement raises the three challenges outlined in
\secref{sec:introduction}: tree-coupled recurrent dependencies, state
divergence before branch selection, and coupled, configuration-dependent cost.
These challenges motivate our joint algorithm, kernel, and runtime design,
which we describe in the next section.

\section{The \sys Architecture}
\label{sec:architecture}

\figref{fig:architecture} gives an overview of \sys, a kernel--runtime
co-design for parallel, memory-efficient tree verification in hybrid models.
Its hardware-aware scheduler and target-model execution engine jointly control
verification cost, execute the selected request trees, and commit the accepted
states.

\begin{figure}[t]
  \centering
  \includegraphics[width=\columnwidth]{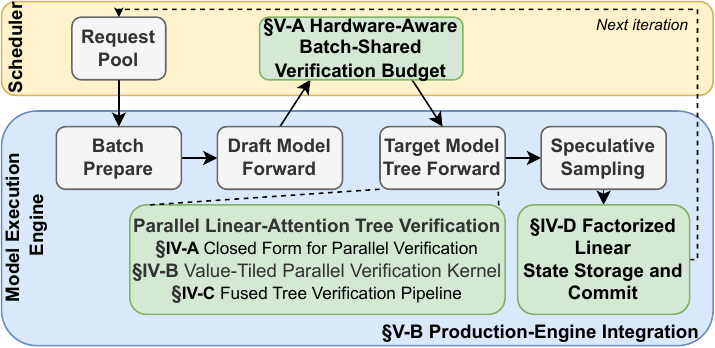}
  \caption{\sys architecture and section roadmap.}
  \label{fig:architecture}
\end{figure}

\runinhead{Overall workflow.}
\sys separates one-time calibration from the recurring serving loop.  Before
serving, a one-time profiler records a hardware-efficient
batch-level verification capacity.  In every target iteration,
\textbf{(1)} the scheduler batches active requests and reads this pre-profiled
capacity.  \textbf{(2)} The drafter generates a candidate tree for each request,
and the scheduler prunes these trees so that their total number of selected
tokens does not exceed the batch budget.  \textbf{(3)} The target verifies the
selected request trees, using tree masks in full-attention layers and \sys's
parallel path in recurrent linear-attention layers.  \textbf{(4)} After
sampling, the engine retains each request's accepted KV entries and
commits its matching linear-attention state.

\runinhead{Design 1: Closed-form formulation for parallel tree
linear-attention verification and a value-tiled GPU kernel (solves L1).}
\secref{sec:closed-form-tree} derives \sys's exactly equivalent all-node closed
form for linear attention over a proposal tree.  \secref{sec:tree-solver} maps
it to a value-tiled GPU kernel that reuses tree factors across value tiles,
keeps finite-Neumann propagation on chip, and fuses output construction.
\secref{sec:fused-pipeline} integrates the kernel with Gated DeltaNet's
convolutional front end and surrounding data transformations.  Together,
these designs provide exact parallel tree
verification while preserving model outputs.

\runinhead{Design 2: Factorized speculative state (solves L2).}
\secref{sec:linear-state-storage} keeps one immutable canonical
linear-attention state and represents candidate branches with token-scale factors.  Once
sampling identifies an accepted path, one batched matrix multiplication
reconstructs its exact state.  Rejected branches use neither state-cache slots
nor rollback.

\runinhead{Design 3: Batch-shared hardware-aware verification budget
(solves L3).}
\secref{sec:hybrid-budget} profiles the complete hybrid target forward for
each execution bucket and records its batch-level capacity.  At runtime,
cumulative draft probability ranks candidates across requests within that
capacity; score monotonicity keeps every request's selected tree
prefix-connected.

\runinhead{Production integration.}
\sys is integrated into SGLang~\cite{zheng2024sglang}, a widely deployed
production LLM serving engine.
\secref{sec:implementation} materializes selected request trees as a
device-resident flat-ragged forest, performs sampling and accepted-path commit
on the GPU, and reuses captured CUDA Graphs~\cite{nvidiaCUDAGraphs} across tree topologies within each
capacity bucket.

\section{Parallel Tree Verification for Recurrent Linear Attention}

\subsection{Deriving \sys's Closed Form for Parallel Verification}
\label{sec:closed-form-tree}

We first state \sys's exact parallel closed form for the complete proposal tree.
\textbf{Theorem~1} computes the outputs of all proposal nodes jointly from one
immutable pre-tree state. It uses parallel all-node computation and one shared
pre-tree state in place of node-wise recurrent execution and per-node state
snapshots.  We derive it through three structural steps:
\textbf{Lemma~1} factorizes every path state, \textbf{Lemma~2} stacks the
resulting corrections into an ancestor-masked linear system, and
\textbf{Lemma~3} expresses its inverse as a finite depth-bounded polynomial.
Together, these lemmas establish the theorem.  \tabref{tab:closed-form-notation}
collects their notation.

\begin{table}[htp]
  \centering
  \caption{Notation for closed-form tree verification.}
  \label{tab:closed-form-notation}
  \footnotesize
  \setlength{\tabcolsep}{2.5pt}
  \renewcommand{\arraystretch}{1.0}
  \begin{tabularx}{\linewidth}{@{}p{0.25\linewidth}X@{}}
    \toprule
    Symbol & Definition \\
    \midrule
    $\mathcal T,T,\pi(i)$ & Proposal tree, node count, and parent of node $i$.
      $\pi(i)=0$ for a root child and $S_0=\Spre$. \\
    $\Anc(i),d$ & Strict draft ancestors of $i$ and the maximum number of
      strict ancestors of any node. \\
    $H,d_k,d_v,\Spre$ & Number of independent state heads, per-head key/value
      dimensions, and the committed per-head pre-tree state
      $\Spre\in\mathbb R^{d_k\times d_v}$. \\
    $Q,K,V$ & Stacked token operands in
      $\mathbb R^{T\times d_k},\mathbb R^{T\times d_k},
      \mathbb R^{T\times d_v}$, with rows $q_i^\top,k_i^\top,v_i^\top$. \\
    $\beta,\gamma$ & Write and decay gates in $\mathbb R^T$, with
      $\gamma_i>0$. \\
    $\widetilde S_i,u_i,S_i,o_i$ & Sequential rule:
      $\widetilde S_i=\gamma_iS_{\pi(i)}$,
      $u_i=\beta_i(v_i-\widetilde S_i^\top k_i)$,
      $S_i=\widetilde S_i+k_iu_i^\top$, and $o_i=S_i^\top q_i$. \\
    $P_i,P,D_P,D_\beta$ &
      $P_i=\prod_{r\in\Anc(i)\cup\{i\}}\gamma_r$,
      $P=(P_1,\ldots,P_T)^\top$,
      $D_P=\operatorname{diag}(P)$, and
      $D_\beta=\operatorname{diag}(\beta)$. Both diagonal matrices are in
      $\mathbb R^{T\times T}$. \\
    $I,\ind[\cdot],\odot$ & Identity matrix
      $I\in\mathbb R^{T\times T}$, indicator function, and element-wise
      multiplication. \\
    $M^-,M^+$ & Ancestor masks in $\{0,1\}^{T\times T}$:
      $M^-_{ij}=\ind[j\in\Anc(i)]$ and $M^+=M^-+I$. \\
    $G,C,R$ &
      $G,C\in\mathbb R^{T\times T}$ and
      $R\in\mathbb R^{T\times d_v}$, where
      $G=D_\beta D_P[(KK^\top)\odot M^-]D_P^{-1}$,
      $C=D_P[(QK^\top)\odot M^+]D_P^{-1}$, and
      $R=D_\beta(V-D_PK\Spre)$. \\
    $U,O$ & Stacked correction and output rows in
      $\mathbb R^{T\times d_v}$: row $i$ is $u_i^\top$ or $o_i^\top$. \\
    \bottomrule
  \end{tabularx}
\end{table}

\begin{theorem}[\sys's closed-form tree verification]
Sequential tree execution of the gated delta rule produces
\begin{compactequation}
\boxed{
O=D_PQ\Spre+
C(I+G)^{-1}D_\beta\bigl(V-D_PK\Spre\bigr).}
\label{eq:main-closed-form}
\end{compactequation}
The identity is algebraically exact and equivalent to sequential execution.
It requires neither per-node linear-attention states nor a node-wise recurrent
traversal.
\end{theorem}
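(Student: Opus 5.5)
We follow the three-step route announced before the statement.

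\textbf{Step 1 (Lemma~1, path-state factorization).} By induction on depth, unrolling $S_i=\gamma_iS_{\pi(i)}+k_iu_i^\top$ along the root-to-$i$ path gives
\[
S_i=P_i\Spre+\sum_{j\in\Anc(i)\cup\{i\}}\frac{P_i}{P_j}\,k_ju_j^\top .
\]
The base case is a root child, where $\pi(i)=0$ and $S_{\pi(i)}=\Spre$. For the inductive step, multiplying the parent's expression by $\gamma_i$ turns $P_{\pi(i)}$ into $P_i$ and each $P_{\pi(i)}/P_j$ into $P_i/P_j$, and the new term $k_iu_i^\top$ carries the factor $P_i/P_i=1$. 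Here $\gamma_r>0$ guarantees every $P_j\neq 0$, which justifies $D_P^{-1}$. The key point is that each node's state depends only on $\Spre$ and the corrections $u_j$ of its own ancestors, never on sibling branches.

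\textbf{Step 2 (Lemma~2, stacked linear system).} Substitute the expression for $S_{\pi(i)}$ into $u_i=\beta_i(v_i-\gamma_iS_{\pi(i)}^\top k_i)$. Since $\gamma_iP_{\pi(i)}=P_i$, this gives
\[
u_i=\beta_i\Bigl(v_i-P_i\Spre^\top k_i-\sum_{j\in\Anc(i)}\frac{P_i}{P_j}(k_i^\top k_j)\,u_j\Bigr).
\]
Read row-wise over $i=1,\dots,T$, the sum over strict ancestors is exactly $D_P[(KK^\top)\odot M^-]D_P^{-1}U$. The leading terms are $D_\beta(V-D_PK\Spre)=R$. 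So $U+GU=R$, that is, $(I+G)U=R$.

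For the outputs, $o_i=S_i^\top q_i$ with $S_i$ from Lemma~1 gives
\[
o_i^\top=P_iq_i^\top\Spre+\sum_{j\in\Anc(i)\cup\{i\}}\frac{P_i}{P_j}(q_i^\top k_j)\,u_j^\top ,
\]
which stacks to $O=D_PQ\Spre+CU$. The only care needed is the row/column orientation of the mask $M^-_{ij}=\ind[j\in\Anc(i)]$ and the placement of the diagonal scalings $D_P$ and $D_P^{-1}$.

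\textbf{Step 3 (Lemma~3, invertibility).} Order the nodes topologically, so every ancestor has a smaller index. Then $M^-$ is strictly lower triangular, and so is $G$. Hence $I+G$ is unit lower triangular and invertible, with no condition on $\beta$ or $K$. Moreover, $(G^m)_{ij}$ is nonzero only along ancestor chains $i\succ\cdots\succ j$ of length $m$, and such chains have $m\le d$. Therefore $G^{d+1}=0$ and
\[
(I+G)^{-1}=\sum_{m=0}^{d}(-G)^m .
\]
This finite Neumann form is what makes the inverse depth-bounded.

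\textbf{Conclusion and main obstacle.} Combining the steps gives $U=(I+G)^{-1}R$ and hence $O=D_PQ\Spre+C(I+G)^{-1}D_\beta(V-D_PK\Spre)$, which is the boxed identity. Every quantity on the right is built from $\Spre$ and the stacked token operands alone, so no per-node state or node-wise traversal is needed. Exact equivalence holds because each step is an algebraic identity rather than an approximation.

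The main obstacle is Step~1 together with its use in Step~2. The ancestor-restricted sum must produce exactly the masked, diagonally rescaled matrices $G$ and $C$; in particular, $C$ uses $M^+$ because $S_i$ already includes node $i$'s own write. I will first check this on a path, where the tree reduces to the chunked delta rule, and on a two-child fork, to confirm that no sibling terms leak in.
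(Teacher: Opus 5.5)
Your proposal is correct and follows essentially the same route as the paper: Lemma~1's path factorization, stacking the readouts into $O=D_PQ\Spre+CU$, the ancestor-masked system $(I+G)U=R$ of Lemma~2, and the depth-bounded nilpotency $G^{d+1}=0$ of Lemma~3, finished by substituting $U=(I+G)^{-1}R$. You give more detail than the paper's proof sketches (the explicit induction, the row-wise placement of $D_\beta$, $D_P$, $D_P^{-1}$, and the chain-counting argument); the only thing left implicit is the harmless convention $P_0=1$ for root children, which your use of $\gamma_iP_{\pi(i)}=P_i$ relies on.
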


Here $D_PQ\Spre$ is the decayed readout of the shared state, $R$ forms
the local deltas, $(I+G)^{-1}$ propagates them along ancestors, and $C$
performs all readouts.  Because nodes are indexed parent before child and
$G_{ij}$ can be nonzero only when $j$ is a strict ancestor of $i$, every
nonzero entry lies below the diagonal. Hence $G$ is strictly lower triangular.
Meanwhile, $M^+$ adds the local update required by each output.  The identity
therefore exposes a small tree interaction shared by many value-channel
right-hand sides.  The following lemmas give the three structural proof steps.

\begin{lemma}[Path-factorized state]
For every node $i$, the pre- and post-update states are
\begin{compactequation}
\begin{aligned}
\widetilde S_i
 &=P_i\Spre+\sum_{j\in\Anc(i)}\frac{P_i}{P_j}k_ju_j^{\top},\\
S_i
 &=P_i\Spre+\sum_{j\in\Anc(i)\cup\{i\}}
                 \frac{P_i}{P_j}k_ju_j^{\top}.
\end{aligned}
\label{eq:path-factorization}
\end{compactequation}
\end{lemma}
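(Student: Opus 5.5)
We argue by strong induction on nodes taken in topological (parent-before-child) order; equivalently, by induction on depth. The second identity follows from the first in one step. By the sequential rule, $S_i=\widetilde S_i+k_iu_i^\top$, and the coefficient $P_i/P_i=1$ turns this into exactly the extra $j=i$ term. So the real content is the formula for $\widetilde S_i$, and we prove both statements together.

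\emph{Base case.} Take $i$ with $\pi(i)=0$, so $\Anc(i)=\emptyset$. Then $P_i=\gamma_i$, and $\widetilde S_i=\gamma_iS_0=\gamma_i\Spre=P_i\Spre$, with an empty sum. This matches the first identity.

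\emph{Inductive step.} Take a non-root-child node $i$ with parent $p=\pi(i)$, and assume the post-update formula for $S_p$. Two structural facts about the tree do all the work:
\begin{itemize}
\item the ancestor sets satisfy $\Anc(i)=\Anc(p)\cup\{p\}$, a disjoint union;
\item consequently $P_i=\gamma_iP_p$, since the product defining $P_i$ runs over $\Anc(i)\cup\{i\}=\Anc(p)\cup\{p\}\cup\{i\}$.
\end{itemize}

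Substituting the hypothesis into $\widetilde S_i=\gamma_iS_p$ gives
\begin{compactequation}
\widetilde S_i=\gamma_iP_p\Spre+\sum_{j\in\Anc(p)\cup\{p\}}\frac{\gamma_iP_p}{P_j}k_ju_j^\top .
\end{compactequation}
Replacing $\gamma_iP_p$ by $P_i$ and the index set by $\Anc(i)$ yields exactly the claimed form. Adding $k_iu_i^\top$ then gives $S_i$.

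All divisions are legitimate because every $\gamma_r>0$, so every $P_j>0$.

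\textbf{Main care point.} The only subtlety is bookkeeping: confirming that the decay ratio $P_i/P_j$ equals the product of the $\gamma_r$ strictly after $j$ along the path, through $i$ inclusive. This is exactly what the recursion $P_i=\gamma_iP_p$ encodes, and it is consistent with the convention $S_0=\Spre$.

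Note that the $u_j$ are treated as fixed quantities defined by the sequential rule. The lemma does not require solving for them, since that is deferred to Lemma~2, so no circularity arises.
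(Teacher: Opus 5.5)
Your proof is correct and follows the paper's own argument. The paper's sketch also uses topological induction, expanding $S_{\pi(i)}$ and multiplying by $\gamma_i$; via $\Anc(i)=\Anc(\pi(i))\cup\{\pi(i)\}$ and $P_i=\gamma_iP_{\pi(i)}$ this produces the ratios $P_i/P_j$, and the local rank-one update $k_iu_i^\top$ supplies the $j=i$ term of the second line, with your version simply writing out the base case and this bookkeeping explicitly.
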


\begin{proof}[Proof sketch]
Expanding $S_{\pi(i)}$ and multiplying by $\gamma_i$ introduces the
gate product $P_i/P_j$.  Topological induction gives the first line,
and the local rank-one update gives the second.
\end{proof}

As a direct consequence of \textbf{Lemma~1}, reading the post-update state with
$q_i$ gives
\begin{compactequation}
o_i^\top=P_iq_i^\top\Spre+
\sum_{j\in\Anc(i)\cup\{i\}}
\frac{P_i}{P_j}(q_i^\top k_j)u_j^\top.
\label{eq:factorized-readout}
\end{compactequation}
The coefficient of $u_j$ is precisely $C_{ij}$. Stacking
\eqnref{eq:factorized-readout} yields $O=D_PQ\Spre+CU$.

\begin{lemma}[Ancestor-masked correction system]
Let $U\in\mathbb{R}^{T\times d_v}$ stack the correction vectors
$u_i^{\top}$.  Then
\begin{compactequation}
(I+G)U=R.
\label{eq:correction-system}
\end{compactequation}
\end{lemma}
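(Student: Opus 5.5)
The plan is to verify Lemma~2 row by row: take the sequential definition of $u_i$, substitute Lemma~1's expression for $\widetilde S_i$, and rearrange. The result should be exactly the $i$-th row of $(I+G)U=R$. No inversion is needed at this stage. We only have to show that the stacked corrections satisfy the linear system; uniqueness comes for free, since $G$ is strictly lower triangular and so $I+G$ is invertible.

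\textbf{Step 1: substitute Lemma~1.} Fix a node $i$. By the sequential rule,
\[
u_i^\top=\beta_i\bigl(v_i^\top-k_i^\top\widetilde S_i\bigr).
\]
Inserting the first line of Lemma~1 gives
\[
k_i^\top\widetilde S_i
 = P_i k_i^\top\Spre
 + \sum_{j\in\Anc(i)}\frac{P_i}{P_j}(k_i^\top k_j)\,u_j^\top .
\]

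\textbf{Step 2: move the ancestor terms to the left.} Rearranging yields
\[
u_i^\top+\sum_{j\in\Anc(i)}\beta_i\frac{P_i}{P_j}(k_i^\top k_j)\,u_j^\top
=\beta_i\bigl(v_i^\top-P_ik_i^\top\Spre\bigr).
\]
The right-hand side is exactly row $i$ of $R=D_\beta(V-D_PK\Spre)$. This holds because row $i$ of $D_PK\Spre$ is $P_ik_i^\top\Spre$, and $D_\beta$ then scales the row by $\beta_i$.

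\textbf{Step 3: match the coefficient to $G_{ij}$.} Expanding $G=D_\beta D_P[(KK^\top)\odot M^-]D_P^{-1}$ entrywise gives
\[
G_{ij}=\beta_iP_i\,(k_i^\top k_j)\,\ind[j\in\Anc(i)]\,P_j^{-1}.
\]
This is precisely the coefficient of $u_j^\top$ in Step~2, and it vanishes for every $j$ outside $\Anc(i)$. The left-hand side is therefore $u_i^\top+\sum_jG_{ij}u_j^\top$, which is row $i$ of $(I+G)U$.

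Stacking the rows over all $i$ gives $(I+G)U=R$. The division by $P_j$ is legitimate because every $\gamma_r>0$, so each $P_j>0$.

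The only delicate point is bookkeeping. One must check that the decay ratio inherited from Lemma~1 appears as $P_i/P_j$ rather than its reciprocal, so that the left factor $D_P$ and the right factor $D_P^{-1}$ in $G$ are placed correctly. One must also confirm that the mask is the strict mask $M^-$, not $M^+$. The $j=i$ term does not belong in the sum, because $\widetilde S_i$ is the pre-update state and carries no $k_iu_i^\top$ term. That self-term is instead what supplies the identity $I$.
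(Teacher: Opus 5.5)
Your proposal is correct and takes the same route as the paper. The paper also substitutes Lemma~1's pre-update state into the definition of $u_i$, reads off the coefficient of each strict-ancestor correction as $G_{ij}=\beta_i(P_i/P_j)k_i^\top k_j$, and stacks the rows; you additionally spell out bookkeeping the sketch leaves implicit, namely that the right-hand side is row $i$ of $R$ and that the strict mask $M^-$ excludes the self-term, which is supplied by $I$.
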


\begin{proof}[Proof sketch]
Substitution into $u_i$ gives one term $G_{ij}u_j$ per strict ancestor,
where $G_{ij}=\beta_i(P_i/P_j)k_i^\top k_j$. Stacking the rows yields
the system.  The ancestor mask excludes siblings, descendants, and the
local node itself, so \eqnref{eq:correction-system} preserves exactly
the dependency set of sequential tree execution.
\end{proof}

\begin{lemma}[Depth-bounded nilpotency]
Let $d$ be the maximum number of strict ancestors of any draft node.
A matrix is nilpotent if one of its finite powers is zero.  Under the
parent-before-child ordering, $G$ is depth-bounded nilpotent:
\begin{compactequation}
G^{d+1}=0,\qquad
(I+G)^{-1}=\sum_{m=0}^{d}(-G)^m.
\label{eq:nilpotency}
\end{compactequation}
\end{lemma}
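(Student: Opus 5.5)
The plan is to show that every nonzero entry of $G^m$ corresponds to an ancestor chain of length $m$, and then use the fact that such chains cannot exceed $d$ steps. Since $D_\beta$ and $D_P$ are diagonal and $D_P$ is invertible ($\gamma_r>0$), we have $G_{ij}=\beta_i(P_i/P_j)(k_i^\top k_j)M^-_{ij}$. So $G_{ij}\neq 0$ only if $j\in\Anc(i)$. The sparsity pattern of $G$ is therefore dominated entrywise by that of $M^-$, and $G=M^-\odot G$.

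\textbf{Step 1 (chain expansion).} I would expand the power as
\[
(G^m)_{ij}=\sum_{\ell_1,\ldots,\ell_{m-1}}G_{i\ell_1}G_{\ell_1\ell_2}\cdots G_{\ell_{m-1}j}.
\]
A summand is nonzero only if $\ell_1\in\Anc(i)$, $\ell_2\in\Anc(\ell_1)$, and so on, down to $j\in\Anc(\ell_{m-1})$. Strict ancestry is transitive and irreflexive on a tree, so such a term requires a strictly decreasing chain $i\succ\ell_1\succ\cdots\succ\ell_{m-1}\succ j$ of $m+1$ distinct nodes on the root-to-$i$ path. Equivalently, $M^-$ is the adjacency matrix of the strict ancestor relation, and $(M^-)^m$ counts such chains.

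\textbf{Step 2 (depth bound).} All of $\ell_1,\ldots,\ell_{m-1},j$ are distinct strict ancestors of $i$. This requires $m\le|\Anc(i)|\le d$. For $m=d+1$ every summand therefore vanishes, which gives $G^{d+1}=0$. The parent-before-child ordering is not needed for nilpotency itself. Its role is to make $G$ strictly lower triangular, as already observed after Theorem~1, which gives a second nilpotency argument with the weaker bound $G^T=0$. The depth argument is what sharpens this bound from $T$ to $d+1$.

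\textbf{Step 3 (inverse).} With $G$ nilpotent, the telescoping identity
\[
(I+G)\sum_{m=0}^{d}(-G)^m=I-(-G)^{d+1}=I
\]
holds, and the same identity holds with multiplication on the other side. So $I+G$ is invertible with the stated finite Neumann series. No convergence or norm condition is needed.

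The main obstacle is the bookkeeping in Step 1: one must argue cleanly that a product chain can never revisit a node or jump across to a sibling branch. I would handle this by proving by induction on $m$ that $\operatorname{supp}(G^m)\subseteq\operatorname{supp}((M^-)^m)$. Then I would note that $((M^-)^m)_{ij}>0$ exactly when there is an $m$-step strict-ancestor chain from $i$ to $j$, and such chains have length at most the depth of $i$, hence at most $d$.
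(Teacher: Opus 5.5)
Your proof is correct and follows the paper's argument. The paper's sketch also identifies nonzero entries of $G^m$ with $m$-step strict-ancestor chains, bounds their length by $d$ to get $G^{d+1}=0$, and telescopes the finite Neumann series. Your version fills in details the sketch leaves implicit, namely the distinctness of chain nodes and the support inclusion $\operatorname{supp}(G^m)\subseteq\operatorname{supp}((M^-)^m)$, and your remark that the parent-before-child ordering is needed only for triangularity, not for the bound $G^{d+1}=0$, is also right.
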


\begin{proof}[Proof sketch]
A nonzero entry of $G^m$ represents an $m$-edge ancestor path.
No such path exceeds $d$. It follows that $G^{d+1}=0$, and
multiplication by $I+G$ makes the series telescope.
\end{proof}

The same structure also guarantees uniqueness.  Since $G$ is strictly
lower triangular, $I+G$ is unit lower triangular and has determinant
one for every valid proposal tree.  Nilpotency further shows that its inverse
is exactly a finite polynomial of degree at most $d$.

\begin{proof}[Proof of Theorem~1]
Lemma~2 supplies $U=(I+G)^{-1}R$, and Lemma~3 makes the inverse finite.
Substitution into the readout following Lemma~1 proves Theorem~1.
\end{proof}
Importantly, the theorem specifies outputs independently of how the
structured system is solved. It identifies all available parallelism while
leaving the hardware schedule to the implementation.  \sys uses the finite
polynomial from \textbf{Lemma~3} as the internal solve used by the value-tiled
verification kernel in \secref{sec:tree-solver}. The maximum supported depth
fixes its sequence of propagation steps before execution.

\subsection{Value-Tiled Parallel Verification Kernel}
\label{sec:tree-solver}

\secref{sec:closed-form-tree} transforms the sequential recurrent updates into
an execution-order-independent closed form, but leaves its hardware mapping
unspecified.  We realize this form as a value-tiled GPU kernel that shares the
tree-dependent factors $G,C$ across value tiles, assigns each tile to one CTA,
and evaluates the finite polynomial with fixed-shape tensor-core operations
while retaining transient intermediates on chip.  We next describe the tile
decomposition, the execution of one CTA, and its operand lifetime.

\runinhead{Value-domain decomposition.}
For each of the $H$ state heads, \eqnref{eq:main-closed-form} jointly
produces $O,U\in\mathbb R^{T\times d_v}$ for the $T$ proposal nodes, where
$d_v$ is the per-head value dimension.  Mapping all $d_v$ columns to one cooperative thread
array (CTA), or CUDA thread block, would require
$\Theta(\widetilde T^2+\widetilde T d_v+d_kd_v)$ on-chip storage and expose
only one CTA per head.  Here $\widetilde T$ is the fixed kernel extent that
covers the complete tree, including its $T$ proposal nodes.

The factors $G,C\in\mathbb R^{T\times T}$ describe propagation and readout
interactions among proposal nodes.  They are determined by the tree topology,
$Q/K$, and gates, and are independent of the value channel.  \sys therefore
partitions the $d_v$ columns into $N_v=\lceil d_v/b_v\rceil$ sets
$\mathcal J_r$ of at most $b_v$ columns.  Let
$X^{(r)}=X[:,\mathcal J_r]$ for
$X\in\{\Spre,V,B_0,R,U,O\}$.  Restricting \eqnref{eq:main-closed-form} to
one value tile gives
\begin{compactequation}
\begin{alignedat}{2}
B_0^{(r)} & =D_P(Q\Spre^{(r)}),\: &
R^{(r)} & =D_\beta\!\left(V^{(r)}-D_P(K\Spre^{(r)})\right),\\
U^{(r)} & =\sum_{m=0}^{d}(-G)^mR^{(r)},\: &
O^{(r)} & =B_0^{(r)}+CU^{(r)}.
\end{alignedat}
\label{eq:kernel-dataflow}
\end{compactequation}
Concatenating the tile results reconstructs $U$ and $O$, so the decomposition
changes only the hardware schedule.  Each CTA processes one value tile while
covering the full $\widetilde T$ node extent.  The tree is not partitioned
across CTAs.

\begin{figure*}[t]
  \centering
  \includegraphics[width=0.98\textwidth]{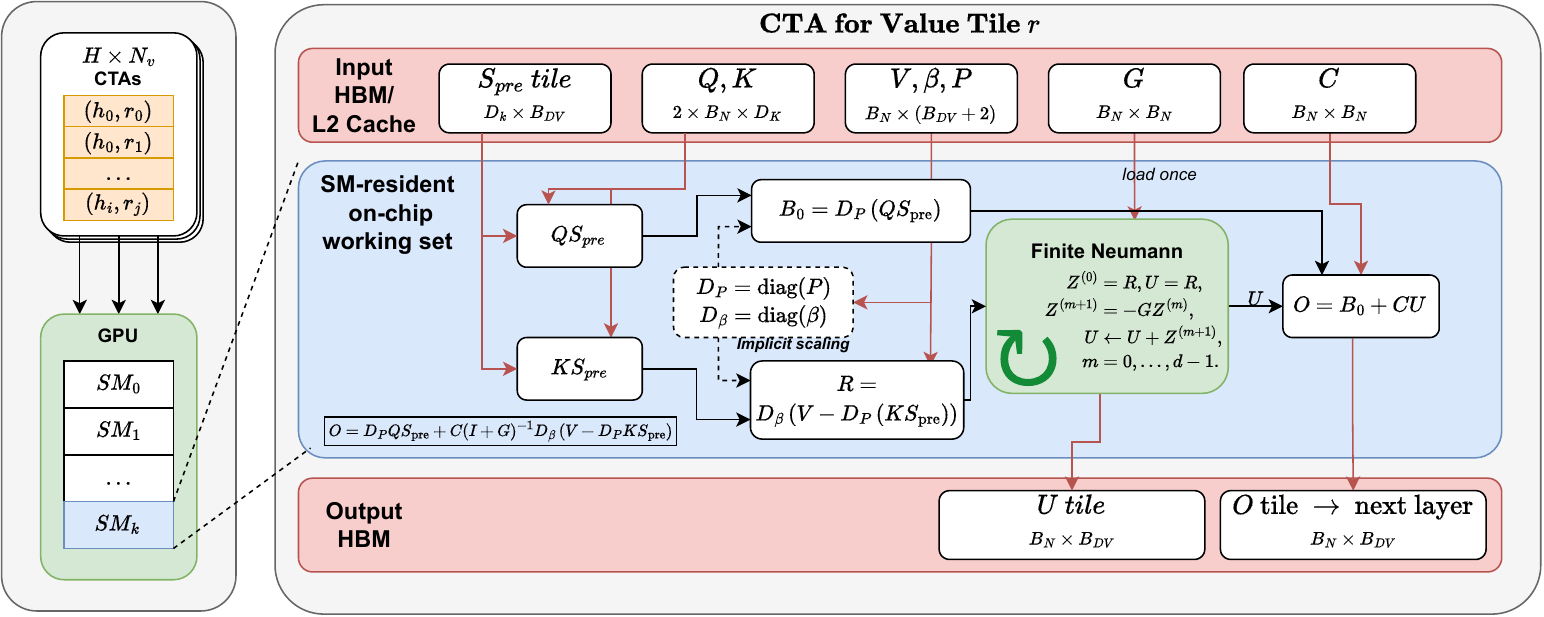}
  \caption{Value-tiled CTA for parallel tree verification
  ($D_K=d_k$, $B_N=\widetilde T$, and $B_{DV}=b_v$).}
  \label{fig:kernel-overview}
\end{figure*}

\runinhead{One-CTA execution.}
Before the value-tile CTAs run, a factor producer forms $KK^\top$ and
$QK^\top$, then applies the ancestor masks, decay ratios, and $\beta$ to
construct $G$ and $C$.  It writes them once to a layer-local workspace,
amortizing the $O(\widetilde T^2d_k)$ Gram construction across all $N_v$ CTAs.

As \figref{fig:kernel-overview} shows, one CTA reads
$\Spre^{(r)}\in\mathbb R^{d_k\times b_v}$ and
$V^{(r)}\in\mathbb R^{\widetilde T\times b_v}$, then forms
$Q\Spre^{(r)}$ and $K\Spre^{(r)}$.  These products are traversed along
$d_k$ in chunks of $b_k$ and decomposed into tensor-core matrix
multiply-accumulate (MMA) fragments.  Although
\eqnref{eq:kernel-dataflow} uses $D_P$ and $D_\beta$, the implementation
stores only the length-$T$ vectors $P$ and $\beta$ and applies them row-wise to
register tiles.  It never materializes either $T\times T$ diagonal matrix.

\runinhead{On-chip finite-Neumann computation.}
The CTA next initializes $Z^{(0)}=R^{(r)}$ and $U^{(r)}=R^{(r)}$, then executes
the following fixed sequence on-chip
\begingroup
\small
\begin{compactequation}
Z^{(m+1)}=-GZ^{(m)},\quad
U^{(r)}\mathrel{+}=Z^{(m+1)},\quad 0\le m<d.
\label{eq:neumann-kernel}
\end{compactequation}
\endgroup
Lemma~3 guarantees that the result is exactly
$U^{(r)}=(I+G)^{-1}R^{(r)}$.  Each $GZ^{(m)}$ is decomposed along the node
reduction dimension into MMA fragments, making every propagation round a
regular tensor-core operation that jointly processes all tree positions.
$Z$ is updated in place and accumulated into $U^{(r)}$. All $d$ rounds execute
on chip within the same CTA, retaining tree-wide parallelism without a
node-wise traversal or intermediate HBM traffic. After the $d$ rounds, the CTA similarly micro-tiles
$CU^{(r)}$ and forms $O^{(r)}=B_0^{(r)}+CU^{(r)}$.  This fixed finite
polynomial is exact and introduces no approximation relative to sequential
execution.

\runinhead{Operand lifetime and on-chip reuse.}
The CTA releases $\Spre^{(r)}$ after the two state products and no longer needs
$P$ or $\beta$ after forming $B_0^{(r)}$ and $R^{(r)}$.  $G$ remains live only
during propagation, while $C$ is loaded for the final readout.  The register
tiles for $B_0^{(r)}$ and $U^{(r)}$ survive until output construction.
$B_0^{(r)}$, $R^{(r)}$, and $Z^{(m)}$ are never written to HBM.  Only the final
$O^{(r)}$ tile and the $U^{(r)}$ tile required by state commit leave the CTA.

\runinhead{Two-dimensional resource tiling.}
High CTA concurrency provides sufficient schedulable work to utilize the GPU
SMs efficiently, but it cannot be increased simply by choosing the smallest
tiles.  The value width $b_v$ determines both the size and the number of
value-tile CTAs.  A larger $b_v$ amortizes tree-factor reads and per-CTA setup
over more value channels, but enlarges the persistent $R/Z/U$ and
accumulator tiles and creates fewer independent CTAs.  A smaller $b_v$ reduces
the register and shared-memory footprint of each CTA and exposes more
parallelism, but eventually duplicates factor reads and fixed CTA overhead
across too many narrow tiles.  \sys therefore uses $b_v$ to balance per-CTA
efficiency against CTA concurrency.

Value tiling alone does not bound the temporary operands of the state products:
each value tile must still contract with the full key dimension.  \sys further
traverses this dimension in $b_k$-wide chunks and uses a single-stage software
pipeline, so only the current $Q/K$ chunk and the matching slice of
$\Spre^{(r)}$ are resident.  A larger $b_k$ provides coarser MMA work and fewer
partial accumulations, whereas a smaller $b_k$ lowers the transient
shared-memory footprint but requires more chunks.  Together, the two tile
widths bound the resident working set of one CTA by
$M_{\rm CTA}=\Theta\!\left(
\widetilde T^2+\widetilde T b_v+b_k(\widetilde T+b_v)\right)$,
while $b_v$ creates $HN_v$ independently schedulable CTAs per request.  \sys
selects $(b_k,b_v)$ from compile-time configurations, balancing the
resource and parallelism benefits of smaller tiles against the reuse and
execution efficiency of larger tiles.

\subsection{Fused Tree-Verification Pipeline}
\label{sec:fused-pipeline}

\secrefs{sec:closed-form-tree}{sec:tree-solver} parallelize the
recurrent core in \eqnref{eq:gdn-background}. A complete Gated DeltaNet
layer also executes the short causal convolution introduced in
\secref{sec:background-hybrid} and surrounding gate and layout operations.
\sys organizes the complete tree execution into one round-level topology step
followed by three per-layer GPU stages:
(1) tree-parallel convolution and gate preparation, (2) factor construction,
and (3) value-tiled solve and readout.

\runinhead{Round-level topology preprocessing.}
The parent map, $M^-$, $M^+$, and maximum depth depend only on the tree. \sys
builds them once per request and round and reuses them across layers and heads.
Fixed buffers avoid repeated transitive closure and support captured graphs.

\runinhead{Stage 1: tree-parallel convolution and gate preparation.}
On a linear sequence, the convolution reads a fixed number of preceding
tokens. For a proposal-tree node, these inputs lie on its root-to-node path,
not at adjacent positions in the flattened tree. Each
$(\text{node},\text{channel tile})$ program follows parent links and fuses
convolution, bias, SiLU, and gate preparation. All nodes execute concurrently;
the fixed kernel width makes per-node work independent of proposal-tree depth.

\runinhead{Stage 2: factor construction.}
For each layer and head, one producer combines the masks with $Q,K,P$, and
$\beta$ to construct $G,C$ once in layer-local scratch; all $N_v$ tile CTAs
reuse them.

\runinhead{Stage 3: value-tiled solve and readout.}
The CTAs compute $R$ and the finite-Neumann recurrence to produce $O$ and
compact $U$ commit factors. Normalization, path decay, and layout conversion
are folded into these stages. The kernel reads the canonical head-major state,
with request indices in its launch grid. Only layer outputs and commit factors
survive the layer; $G,C$ remain short-lived scratch. Keeping the factor/solve
boundary avoids recomputing $G,C$ in each of the $N_v$ tile CTAs.

\subsection{Factorized Linear-State Storage and Commit}
\label{sec:linear-state-storage}

\figref{fig:state-lifetime} follows the state data across verification,
sampling, and commit. Unlike an append-only KV
cache, a linear-attention state is one large
mutable summary.  Forking it for every proposal node would require a
full $d_k\times d_v$ snapshot per head and branch.  \sys instead keeps
one committed state per active request and layer in an HBM slot. This
$\Spre$ remains immutable throughout verification, so no branch owns a
private state or requires rollback.  Slots use the serving runtime's
canonical $[H,d_v,d_k]$ layout. Verification reads the existing slot,
and commit overwrites it only after the accepted path is known. The
verification kernel's $U$ output directly  feeds the state lifecycle shown in
\figref{fig:state-lifetime}.

\begin{figure}[t]
  \centering
  \includegraphics[width=0.99\columnwidth]{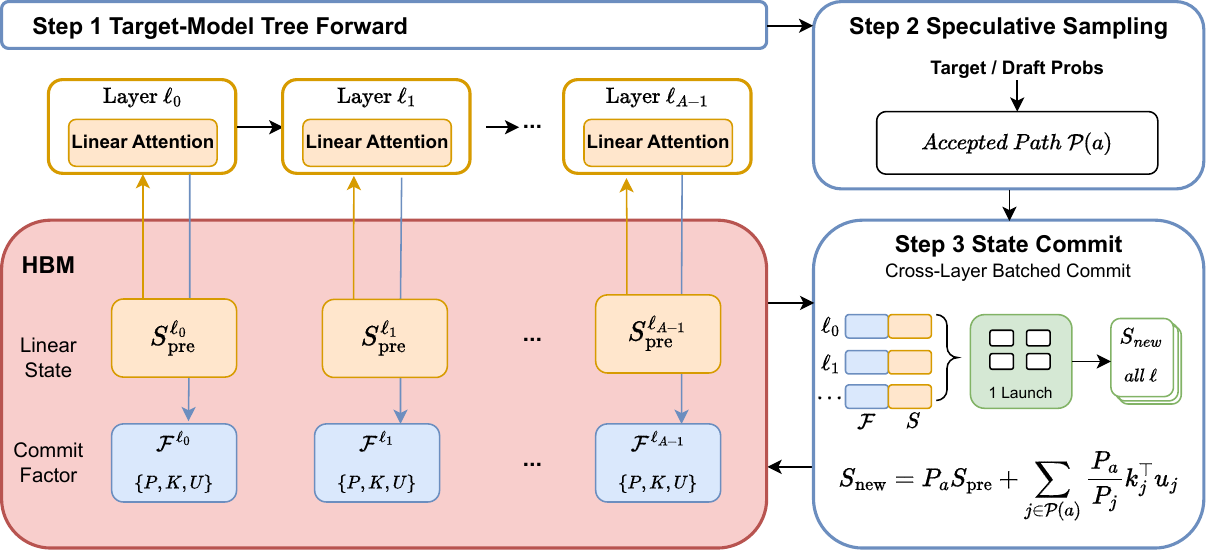}
  \caption{Factorized speculative-state lifecycle.}
  \label{fig:state-lifetime}
\end{figure}

\runinhead{Factorized speculative states.}
Materializing $T$ candidate states has space complexity
$\Theta(THd_kd_v)$. By Lemma~1, \sys retains
only the node factors $P,K,U$ in addition to the single committed
state, reducing the speculative-state storage complexity to
\begin{compactequation}
\underbrace{\Theta(Hd_kd_v)}_{\text{committed state}}
+\underbrace{\Theta\!\left(TH(d_k+d_v+1)\right)}_{\text{tree factors}}.
\label{eq:factorized-storage}
\end{compactequation}
For grouped key heads, each shared $K$ factor is counted once.
When $d_k=d_v=d_s$, a full state costs $\Theta(Hd_s^2)$ elements per
branch, whereas one factorized node costs only $\Theta(2Hd_s)$.  The
advantage therefore grows with the state dimension rather than relying
on a particular proposal-tree shape.

The $G,C$ buffers are short-lived verification scratch and are reused from
layer to layer. They are not persistent sequence state.  Rejected
branches therefore consume no state-cache slots and disappear when
their factors are recycled.

\sys distinguishes three lifetimes. Canonical states persist with the
request. The factors $P,K,U$ persist only until the round's acceptance
decision. The terms $G,C$ and the Neumann workspaces live only during
one layer.  This
separation permits fixed-capacity scratch reuse across layers and rounds while
state-cache capacity remains independent of the number of draft nodes.

\runinhead{Accepted-path commit.}
For one head, suppose verification selects terminal node $a$.  Let
$\mathcal P(a)=\Anc(a)\cup\{a\}$ be its accepted path, let
$K_a\in\mathbb R^{|\mathcal P(a)|\times d_k}$ and
$U_a\in\mathbb R^{|\mathcal P(a)|\times d_v}$ stack the path factors in
topological order, and define
$D_a=\operatorname{diag}(P_a/P_j)_{j\in\mathcal P(a)}$.
Lemma~1 reconstructs the only state that must persist:
\begin{compactequation}
S_{\rm new}=P_a\Spre+(D_a K_a)^\top U_a.
\label{eq:accepted-commit}
\end{compactequation}
\sys compacts the accepted path once per round and evaluates the second
term as one tensor-core matrix multiplication.  The selected path is
shared across all linear layers, allowing their commits to be folded
into one device launch that writes each layer's canonical cache slot.
The old state remains readable until this launch completes, so no
additional state snapshot or copy-on-write slot is needed.  If no draft
token is accepted, the slot retains $\Spre$ unchanged.

\section{System Optimization}

Beyond parallel tree execution, serving must decide how much work to admit
across concurrent requests and map the resulting irregular forest to a
graph-captured GPU iteration.  \sys combines hardware-aware batch-level
scheduling with a device-resident production runtime.

\subsection{Hardware-Aware Batch-Level Verification Scheduling}
\label{sec:hybrid-budget}

\runinhead{Separating capacity from proposal utility.}
Given one candidate tree per request, the scheduler answers two distinct
questions: how many nodes the hybrid target can verify in its hardware-efficient
regime, and which nodes should consume that capacity.  A node count alone is
not a stable cost model.  For request $i$, let $n_i$, $L_i$, and $d_i$ denote
its selected node count, committed KV length, and tree depth; let
$N=\sum_i n_i$.  The target latency can be viewed as
\begin{compactequation}
\begin{aligned}
T_{\rm ver}(\mathbf n,\mathbf L,\mathbf d)={}&
T_{\rm full}(\mathbf n,\mathbf L)
+T_{\rm linear}(\mathbf n,\mathbf d)\\
&+T_{\rm dense}(N)+T_{\rm other}(N).
\end{aligned}
\label{eq:hybrid-cost-decomposition}
\end{compactequation}
Full-attention layers read the committed prefixes, so their dominant KV work
depends on both $n_i$ and $L_i$.  Recurrent layers avoid this prefix scan, but
their tree interaction and propagation schedules depend on $n_i$ and $d_i$.
The shared projections and MLPs batch all $N$ nodes and cross from
weight-bandwidth-bound to compute-bound execution at a platform-specific
operating point.  Thus equal node counts can expose different hybrid-forward
costs, and the capacity must be calibrated for the complete target path.

\runinhead{Offline calibration of hardware-efficient capacity.}
A profiling configuration $c$ specifies the model, GPU and parallelism
setup, batch-size and KV-length buckets, and a supported tree-depth/template
bucket.  For each $c$, the offline profiler sweeps the total number of
selected nodes over the statically supported graph capacities $\mathcal G$
and records the latency $T_{\rm ver}(N\mid c)$ of the complete target
forward.  Let $T_{\rm dec}(c)$ be the latency of ordinary
one-token-per-request decoding under the same configuration.  \sys selects
\begingroup
\small
\begin{compactequation}
B_{\rm ver}(c)=\max\left\{N\in\mathcal G:\
T_{\rm ver}(N\mid c)\le(1+\epsilon)T_{\rm dec}(c)\right\},
\label{eq:profiled-budget}
\end{compactequation}
\endgroup
where $\epsilon$ bounds the additional target-forward latency admitted for
speculative verification.  Measuring the complete target forward captures
recurrent propagation, full attention, MLP execution, kernel fusion,
tensor-parallel communication, and CUDA Graph bucket effects in one capacity.
At runtime, existing batching metadata identifies $c$ and retrieves its
batch-shared capacity $B_{\rm ver}(c)$.

\runinhead{Batch-wide utility allocation.}
The calibrated capacity fixes the total verification work; \sys then distributes
it according to the drafter's estimate of node utility.  For node $v$, let
$\mathcal P(v)$ be its root-to-$v$ path and $\pi(u)$ the parent of node $u$.
We score node $v$ as
$\rho(v)=\prod_{u\in\mathcal P(v)}
p_{\rm draft}\bigl(u\mid\pi(u)\bigr)$,
the probability assigned by the drafter to reaching $v$ along its complete
path.  After retaining the root of each active tree, a global selection fills
the remaining capacity with the highest-scoring nodes across the batch.
Because a descendant's cumulative probability cannot exceed its parent's, the
selected nodes of each request are prefix-connected without a separate repair
pass.  A segmented count of the selected nodes produces a variable budget
$q_i$ for each request while preserving the fixed batch total.  This allocation
maximizes the retained cumulative probability mass under the calibrated
capacity, shifting work toward the most promising continuations.  The selector
is implemented with fixed-shape device top-$k$ and scatter operations, allowing
allocation to remain inside the captured GPU iteration.

\subsection{Production-Engine Integration}
\label{sec:implementation}

We integrate \sys into SGLang, one of the most widely deployed production LLM
inference engines, with approximately 6.2\,kLoC of Python and Triton~\cite{tillet2019triton} code.
The implementation comprises parallel verification kernels that can be readily
ported to other inference engines, together with deep optimizations to SGLang's
speculative-decoding and execution runtime.

\runinhead{CUDA Graph-native packed-forest execution.}
The integration connects \sys's verifier to SGLang's GDN backend, EAGLE loop,
CUDA Graph runners, and linear-state pool.  The allocator's selected node IDs
and per-request budgets feed a device-side materializer, which topologically
packs the variable-sized trees into one flat-ragged forest.  Request offsets,
parent indices, and compact ancestor metadata describe the forest without
changing its total capacity.  A request-aware block map then emits only
verification tiles internal to each request, so compacting the storage also
reduces the work presented to the recurrent kernels.  The same packed topology
drives full-attention layers through SGLang's tree-attention path and recurrent
layers through \sys's parallel verifier.  Consequently, capacity-bucketed CUDA
Graphs retain fixed addresses and kernel shapes while the node allocation and
tree topology vary from round to round.

\runinhead{Unified cross-layer state commit.}
Each recurrent layer writes its compact $P,K,U$ factors directly into a fixed
region of a preallocated contiguous GPU buffer.  Target sampling produces one accepted
path descriptor per request, and device-side compaction maps these paths to the
corresponding packed rows.  A single batched launch consumes the descriptors
across recurrent layers to evaluate \eqnref{eq:accepted-commit}, while
SGLang commits the matching KV entries for full-attention layers.  The buffer is
reused by the next round, providing a common data path from verification to
both hybrid-model state representations.

\runinhead{Bubble-free GPU execution.}
Selection, forest materialization, target execution, sampling, path compaction,
and KV/GDN state commit remain on the GPU as one captured execution flow.  This
eliminates host-induced synchronization gaps and allows the CPU scheduler to
prepare subsequent work while the current target iteration executes.  The GPU
pipeline therefore remains continuously supplied across speculative rounds,
sustaining high utilization.

\begin{figure*}[!t]
  \centering
  \includegraphics[width=0.98\textwidth]{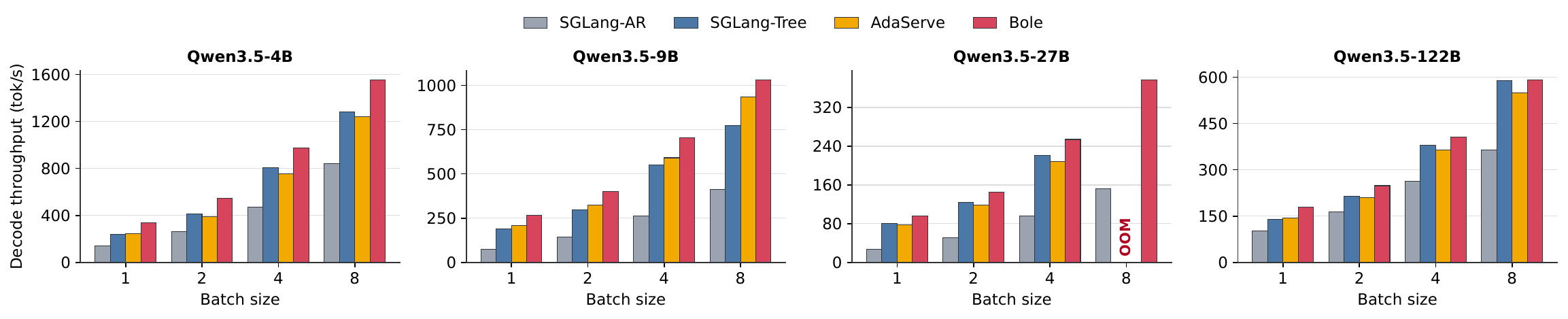}
  \caption{A100 MBPP throughput across Qwen3.5 models and batch sizes.}
  \label{fig:eval-offline-a100}
\end{figure*}

\begin{figure*}[!t]
  \centering
  \begin{minipage}[t]{0.71\textwidth}
    \vspace{0pt}
    \centering
    \includegraphics[width=\linewidth]{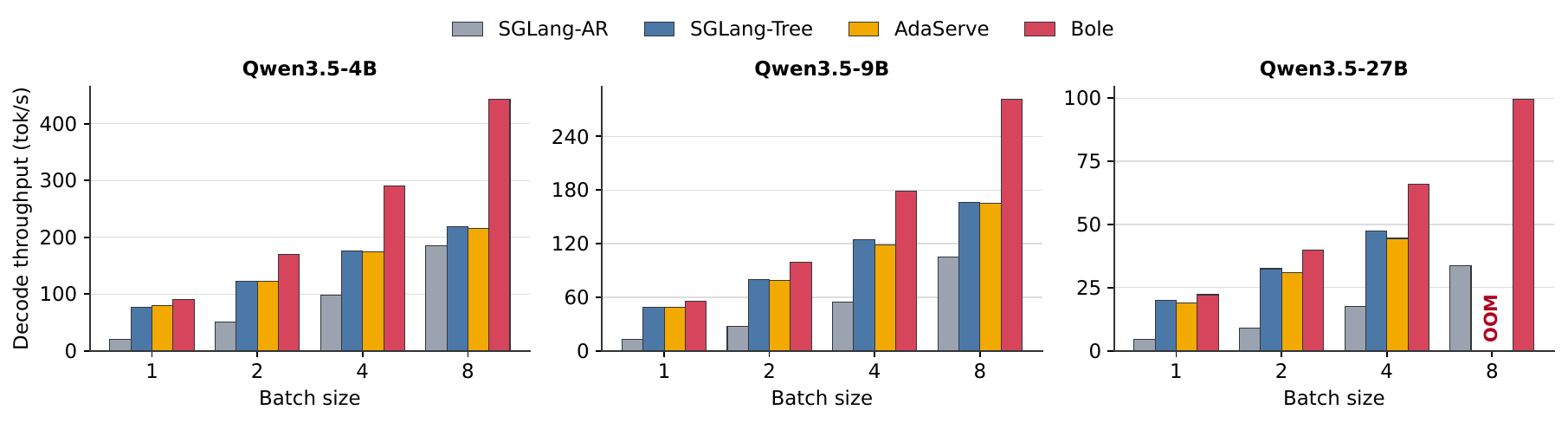}
    \caption{GB10 MBPP throughput across models and batch sizes; OOM indicates
    out of memory.}
    \label{fig:eval-offline-gb10}
  \end{minipage}
  \hfill
  \begin{minipage}[t]{0.27\textwidth}
    \vspace{0pt}
    \centering
    \makeatletter
    \def\@captype{table}
    \makeatother
    \caption{Mean accepted tokens (MAT) on MBPP.}
    \label{tab:eval-offline-mat}
    \scriptsize
    \setlength{\tabcolsep}{1.5pt}
    \renewcommand{\arraystretch}{1.18}
    \begin{tabular*}{\linewidth}{@{\extracolsep{\fill}}lrrr@{}}
      \toprule
      Model & \shortstack{SGLang-\\Tree} & AdaServe & \sys \\
      \midrule
      Qwen3.5-4B       & 6.29 & 6.35 & 6.38 \\
      Qwen3.5-9B       & 6.76 & 6.82 & 6.88 \\
      Qwen3.5-27B      & 6.56 & 6.61 & 6.62 \\
      Qwen3.5-122B-A10B & 6.31 & 6.39 & 6.40 \\
      \bottomrule
    \end{tabular*}
  \end{minipage}
\end{figure*}

\section{Evaluation}
\label{sec:evaluation}

After describing the experimental setup (\secref{sec:eval-setup}), we evaluate
end-to-end decode throughput (\secref{sec:eval-offline}), generality across
workloads (\secref{sec:eval-datasets}), and real-world online agent serving
(\secref{sec:eval-online}).  We then isolate parallel-verification kernel
efficiency (\secref{sec:eval-kernel}), validate the hardware-aware verification
budget through sensitivity analysis (\secref{sec:eval-budget}), and quantify
each design's contribution through an ablation study
(\secref{sec:eval-ablation}).

\subsection{Evaluation Setup}
\label{sec:eval-setup}

\runinhead{Platforms and models.}
We use two platforms with distinct compute-to-memory ratios:
\begin{itemize}[leftmargin=1.3em,nosep]
  \item \emph{A100.} One server with four NVLink-connected NVIDIA A100
  80\,GB SXM4 GPUs, each with 312 FP16/BF16 TFLOP/s and 2.04\,TB/s
  HBM2e bandwidth.
  \item \emph{GB10.} One NVIDIA DGX Spark system powered by the GB10
  Grace Blackwell Superchip, with 256 FP16/BF16 TFLOP/s and 128\,GB of
  coherent LPDDR5x at 273\,GB/s.
\end{itemize}
\tabref{tab:eval-deployment} maps four Qwen3.5~\cite{qwen2026qwen35}
models, from 4B to 122B parameters, to these platforms.  We use each
model's native multi-token-prediction (MTP) head as the drafter, execute
unquantized weights, and use NVLink-connected tensor parallelism
(TP)~\cite{Narayanan2021megatron} only for Qwen3.5-122B-A10B.

\begin{table}[t]
  \centering
  \caption{Model deployment across evaluation platforms.}
  \label{tab:eval-deployment}
  \footnotesize
  \begin{tabular}{@{}lcc@{}}
    \toprule
    Model & \shortstack{A100\\(4$\times$80\,GB SXM)}
          & \shortstack{GB10\\(128\,GB LPDDR5x)} \\
    \midrule
    Qwen3.5-4B       & 1 GPU (TP1)  & 1 GPU (TP1) \\
    Qwen3.5-9B       & 1 GPU (TP1)  & 1 GPU (TP1) \\
    Qwen3.5-27B      & 1 GPU (TP1)  & 1 GPU (TP1) \\
    Qwen3.5-122B-A10B & 4 GPUs (TP4) & -- \\
    \bottomrule
  \end{tabular}
\end{table}

\runinhead{Baselines.}
We compare against three baselines in SGLang~\cite{GitHubsglang}, one of the
most widely deployed production LLM serving engines.  \emph{SGLang-AR} performs
standard autoregressive decoding.
\emph{SGLang-Tree} uses SGLang's native tree-speculative decoding path.
\emph{AdaServe}~\cite{li2026adaserve} is a state-of-the-art
tree-speculative serving system.  Because its open-source implementation does
not support the evaluated hybrid-attention models, we port and optimize it
in SGLang.  All tree methods use the same native MTP drafter, model weights,
sampling configuration, and total proposal-token budget.  We set
top-$k=4$ and the maximum depth to 8.

\runinhead{Workloads.}
Offline experiments span code (MBPP), dialogue (ShareGPT), mathematical
reasoning (GSM8K), and summarization
(CNN/DailyMail)~\cite{austin2021mbpp,sharegpt,cobbe2021gsm8k,see2017cnndailymail}.
For online experiments, we replay a real-world coding-agent workload:
OpenHands~\cite{wang2025openhands} sessions from NVIDIA
Open-SWE-Traces~\cite{ahmad2026openswetraces}.

\runinhead{Metrics and methodology.}
For offline serving, we hold the active batch size fixed and report decode
throughput in accepted output tokens per second.  For online serving, we
report time to first token (TTFT), including queueing and prefill, and time per
output token (TPOT) after the first token.

\subsection{End-to-End Decode Throughput}
\label{sec:eval-offline}

We evaluate end-to-end offline decode throughput on MBPP using all four
systems.  \figsref{fig:eval-offline-a100}{fig:eval-offline-gb10}
report results for all 28 configurations.

\begin{figure*}[!t]
  \centering
  \begin{minipage}[t]{0.71\textwidth}
    \vspace{0pt}
    \centering
    \includegraphics[width=\linewidth]{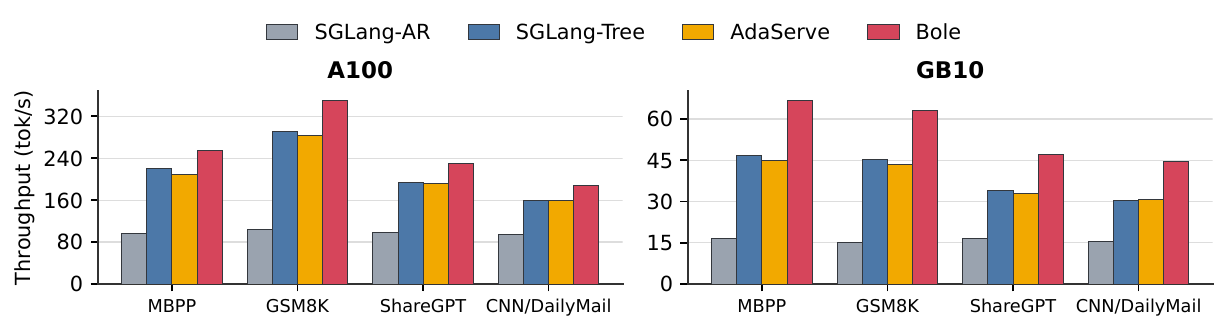}
    \caption{Qwen3.5-27B throughput across workloads on A100 and GB10 at batch
    size 4.}
    \label{fig:eval-datasets}
  \end{minipage}
  \hfill
  \begin{minipage}[t]{0.27\textwidth}
    \vspace{0pt}
    \centering
    \makeatletter
    \def\@captype{table}
    \makeatother
    \caption{MAT across generation workloads.}
    \label{tab:eval-dataset-mat}
    \scriptsize
    \setlength{\tabcolsep}{1.5pt}
    \renewcommand{\arraystretch}{1.18}
    \begin{tabular*}{\linewidth}{@{\extracolsep{\fill}}lrrr@{}}
      \toprule
      Dataset & \shortstack{SGLang-\\Tree} & AdaServe & \sys \\
      \midrule
      MBPP          & 6.57 & 6.68 & 6.70 \\
      GSM8K         & 7.07 & 7.08 & 7.08 \\
      ShareGPT      & 5.31 & 5.50 & 5.51 \\
      CNN/DailyMail & 4.71 & 4.95 & 4.98 \\
      \bottomrule
    \end{tabular*}
  \end{minipage}
\end{figure*}

\begin{figure*}[!t]
  \centering
  \includegraphics[width=0.98\textwidth]{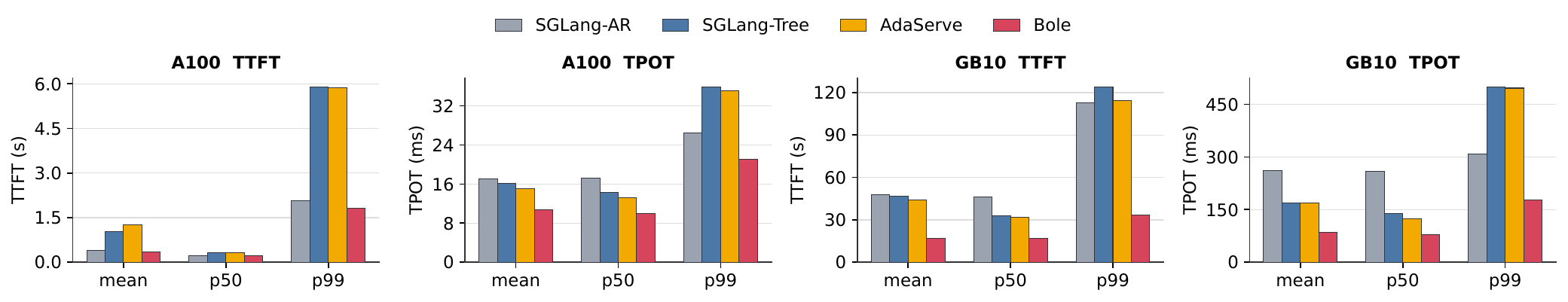}
  \caption{Mean, p50, and p99 online latency on Open-SWE-Traces, using
  Qwen3.5-122B-A10B on A100 and Qwen3.5-27B on GB10; lower is better.}
  \label{fig:eval-online}
\end{figure*}

\sys consistently improves decode throughput.
Across all configurations, it achieves a
geometric-mean speedup of $2.74\times$ over SGLang-AR and $1.26\times$
over the strongest speculative baseline.
On A100, its peak speedups over SGLang-AR, SGLang-Tree, and AdaServe are
$3.62\times$, $1.41\times$, and $1.39\times$, respectively; on GB10, the
corresponding peak speedups reach $4.72\times$, $2.03\times$, and
$2.06\times$.  \sys delivers the highest throughput across the single-GPU
deployments and in the 122B TP4 deployment, demonstrating that its benefits
extend to tensor-parallel execution.  \tabref{tab:eval-offline-mat} further shows that
our pruning method consistently attains higher MAT than both
speculative baselines across all four models.

The parallel verification kernel substantially accelerates tree verification.
On platforms with a higher compute-to-memory ratio, memory-bound decoding
leaves more compute headroom for a larger proposal budget.  At these larger
budgets, \sys's parallel verifier delivers even greater speedups by efficiently
absorbing the increased verification workload.  On GB10, its
speedup over the strongest speculative baseline increases from
$1.11$--$1.14\times$ at batch size 1 to $1.70$--$2.03\times$ at batch size 8.

\begin{table}[t]
  \centering
  \caption{Measured non-cache speculation overhead on A100 at batch size 4
  with 128 proposal nodes (GiB); lower is better.}
  \label{tab:eval-memory}
  \footnotesize
  \setlength{\tabcolsep}{4pt}
  \begin{tabular}{@{}lrrr@{}}
    \toprule
    Model & \shortstack{SGLang-\\Tree} & AdaServe & \sys \\
    \midrule
    Qwen3.5-4B        &  8.72 &  7.79 & \textbf{1.53} \\
    Qwen3.5-9B        &  8.74 &  7.81 & \textbf{1.55} \\
    Qwen3.5-27B       & 24.00 & 21.00 & \textbf{2.42} \\
    Qwen3.5-122B-A10B & 28.04 & 25.04 & \textbf{6.52} \\
    \bottomrule
  \end{tabular}
\end{table}

Factorized state management expands the feasible batching capacity while
preserving exact, lossless verification semantics.  \tabref{tab:eval-memory}
reports the additional memory used by tree speculation after subtracting AR
and MTP-drafter memory.  The results match the transient-state analysis in
\tabref{tab:transient-memory}: \sys reduces this overhead by 74.0--88.5\%
relative to AdaServe, freeing 6.26--18.58~GiB for serving caches.  The 122B row
aggregates the four GPUs used for TP4.  This additional capacity allows
\sys to run Qwen3.5-27B at batch size 8 without OOM on either platform, whereas both speculative baselines OOM.

\subsection{Generality Across Workloads}
\label{sec:eval-datasets}

Following the setup in \secref{sec:eval-offline}, we evaluate the other
three workloads at batch size 4.

The workloads vary substantially in prediction difficulty, yielding MATs
from 4.98 on CNN/DailyMail to 7.08 on GSM8K
(\figref{fig:eval-datasets} and \tabref{tab:eval-dataset-mat}).  Relative to
SGLang-AR, \sys achieves
$1.98\times$--$3.36\times$ speedups on A100 and $2.85\times$--$4.12\times$
on GB10.  Relative to the faster speculative baseline, the corresponding
speedups are $1.15\times$--$1.20\times$ and $1.38\times$--$1.45\times$.
Overall, \sys delivers strong speedups across diverse
workloads.

\subsection{Real-World Online Agent Serving}
\label{sec:eval-online}

We replay multi-turn Open-SWE-Traces sessions using the largest model
supported by each platform: Qwen3.5-122B-A10B on A100 and Qwen3.5-27B on
GB10.
Following standard LLM-serving methodology~\cite{amey2024sarathiserve},
session start times follow a Poisson process, and every system receives the
same arrival trace on each platform.  \figref{fig:eval-online} reports
mean, p50, and p99 TTFT and TPOT.

\sys reduces mean TTFT and TPOT against every baseline.  Compared with
SGLang-AR, it lowers mean TTFT by 15.8\%--64.3\% and mean TPOT by
37.6\%--67.6\% across the two platforms.  Against SGLang-Tree, the
corresponding reductions are 63.5\%--67.6\% and 33.8\%--50.3\%, while
against AdaServe they are 61.3\%--73.3\% and 28.9\%--49.9\%.

\sys's factorized state representation leaves substantially more memory for
the KV cache, reducing repeated prefills and thus improving TTFT.
Each agent turn extends a long prefix shared with its preceding turns, so
evicting a cached prefix forces a repeated prefill and directly increases
TTFT.  Existing tree systems reserve a recurrent-state snapshot for every
proposal, incurring 82--99$\times$ the transient-memory footprint of
\sys (\tabref{tab:transient-memory}) and leaving less capacity for reusable
prefixes.  By retaining only factorized updates, \sys attains 90.8\% and 92.6\% prefix-cache hit
rates on A100 and GB10.  SGLang-Tree reaches 69.7\% and 58.3\%, while
AdaServe reaches 69.3\% and 59.8\%.  \sys thus recovers 21.1 to 34.3
percentage points of cache hits and nearly matches SGLang-AR's 89.8\% and
93.3\%, despite verifying a proposal tree.

Under continuous batching, TTFT and TPOT interact through the shared
execution stream, and \sys improves both.  TTFT consists of queueing and
prefill: compared with SGLang-AR, \sys's lower TPOT releases batch slots
sooner and shortens queueing, while its factorized state preserves a
comparable prefix-cache hit rate and avoids additional prefills.  Conversely,
fewer prefills also reduce interference with active decoding, reinforcing
the TPOT improvement.

\subsection{Parallel Verification Kernel Efficiency}
\label{sec:eval-kernel}

To isolate parallel recurrent verification, we compare the value-tiled
verifier from \secref{sec:tree-solver} with SGLang-Tree's serial
delta-rule verifier on Qwen3.5-9B under the same proposal tree and drafter
output. \tabref{tab:kernel-verify} reports complete-core latency and
solve-kernel metrics, both collected using
\texttt{nsys}~\cite{nvidiaNsightSystems} and
\texttt{ncu}~\cite{nvidiaNsightCompute}; $B$ denotes batch size.

\begin{table}[t]
  \centering
  \caption{Per-layer Gated DeltaNet tree-verification efficiency on
  Qwen3.5-9B.}
  \label{tab:kernel-verify}
  \footnotesize
  \setlength{\tabcolsep}{4.5pt}
  \begin{tabular}{@{}llrrcrrrr@{}}
    \toprule
    & & \multicolumn{3}{c}{Latency ($\mu$s)}
      & \multicolumn{2}{c}{Occupancy}
      & \multicolumn{2}{c}{L1/shared} \\
    \cmidrule(lr){3-5}\cmidrule(lr){6-7}\cmidrule(lr){8-9}
    Plat. & $B$ & Serial & \sys & \makebox[0pt]{Speedup}
      & Serial & \sys & Serial & \sys \\
    \midrule
    A100 & 1  & 174  & 50.6 & $\mathbf{3.4\times}$
      & 1.9\%  & 15.0\% & 14.7\% & 60.8\% \\
    A100 & 16 & 822  & 110  & $\mathbf{7.5\times}$
      & 11.4\% & 41.4\% & 51.3\% & 93.5\% \\
    GB10 & 1  & 333  & 66.6 & $\mathbf{5.0\times}$
      & 6.6\%  & 43.1\% & 10.1\% & 34.6\% \\
    GB10 & 16 & 3420 & 443  & $\mathbf{7.7\times}$
      & 16.4\% & 58.9\% & 12.5\% & 36.8\% \\
    \bottomrule
  \end{tabular}
\end{table}

Across both platforms, \sys accelerates the complete verification core by
$3.4\times$--$7.7\times$. The speedup rises from $3.4\times$ and
$5.0\times$ at $B=1$ to $7.5\times$ and $7.7\times$ at $B=16$ on A100
and GB10, respectively. This scaling follows from removing the
parent-to-child critical path within every proposal tree, which exposes
node-level parallelism both within and across requests.

The closed-form solver exposes node-level parallelism, converting proposal
nodes into independently schedulable GPU work and increasing achieved
occupancy by $3.6\times$--$7.9\times$ across the four settings. By keeping
the main verification work on chip rather than materializing full recurrent
states in HBM, the factorized, value-tiled path increases L1/shared-memory
throughput by $1.8\times$--$4.1\times$. At $B=16$, it writes 9.8\,MB per
layer instead of 824\,MB, an $84\times$ reduction.
State materialization consumes 38\% and 86\% of serial verification time
on A100 and GB10, respectively, explaining the larger benefit on GB10's
lower-bandwidth unified memory.

Factorization adds one batched commit over only the accepted path. The
commit runs once per target-model forward and accounts for less than
0.5\% of its latency.

\subsection{Hardware-Aware Verification-Budget Sensitivity}
\label{sec:eval-budget}

\begin{figure}[t]
  \centering
  \includegraphics[width=0.92\columnwidth]{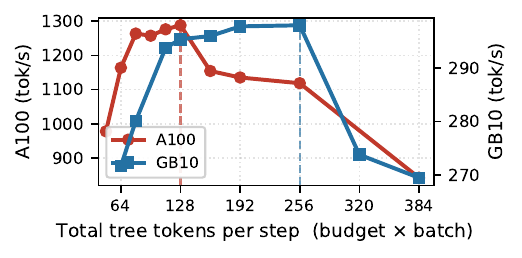}
  \caption{Budget sensitivity on Qwen3.5-9B/MBPP at batch size 8.}
  \label{fig:eval-budget}
\end{figure}

\figref{fig:eval-budget} varies the total number of target-model tokens
verified per decoding step while holding tree width, maximum depth, and
batch size fixed. Initially, a larger budget exposes more likely branches
and raises MAT faster than it increases target-model cost. Beyond the
near-free verification region, additional work dominates the acceptance
benefit and throughput declines. The peak occurs at 128 total tree tokens
on A100 but at 256 tokens on GB10. A single fixed budget is therefore
suboptimal across GPU architectures. This result corroborates
\figref{fig:motivation}(c) and validates the design in
\secref{sec:hybrid-budget}.

\subsection{Component Ablation}
\label{sec:eval-ablation}

\begin{figure}[t]
  \centering
  \includegraphics[width=0.94\columnwidth]{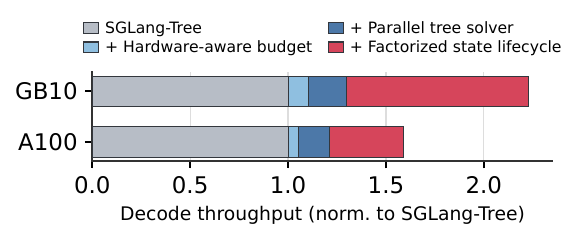}
  \caption{Cumulative online ablation of \sys's three designs over
  SGLang-Tree.}
  \label{fig:eval-ablation}
\end{figure}

\figref{fig:eval-ablation} cumulatively adds \sys's three designs to
SGLang-Tree under the online agent workload.

The hardware-aware budget first improves throughput by 5.3\% on A100 and
10.7\% on GB10.  This step changes neither the serial verifier nor its state
representation.  It only stops expanding the batch after additional
proposals cost more than their expected accepted tokens, isolating the
benefit of matching the verification budget to the platform.

Replacing serial recurrent verification raises the cumulative gains to
$1.21\times$ on A100 and $1.30\times$ on GB10.  This increment is larger
than budgeting alone because the selected proposal budget still contains
many mutually independent branches.  The solver removes their
parent-to-child execution chain, allowing the engine to exploit the
cross-request and intra-tree parallelism already present in the online
batch.

Finally, factorized state management raises the cumulative speedup to
$1.59\times$ on A100 and $2.23\times$ on GB10.  At A100's 128-token budget,
it cuts Qwen3.5-122B-A10B's transient state from 18.0 to 0.18~GiB, freeing
17.8~GiB for the KV cache.  The larger GB10 gain reflects its tighter memory
capacity and cache-hit improvement (\secref{sec:eval-online}).  Together,
budgeting selects useful proposals, parallel verification executes them
efficiently, and factorization preserves reusable agent state.  All three
contribute to the full online gain.

\section{Related Work}

\runinhead{Tree Speculative Decoding.}
Speculative decoding accelerates autoregressive inference by
verifying a sequence of drafted tokens in parallel~\cite{
leviathan2023speculative,chen2023accelerating,Gloeckle2024better,
liu2026speculative,wang2026adaptive,li2024specpim,zhang2026swiftspec}.
Among these methods, tree speculation stands out for longer accepted paths
and greater speedups~\cite{cai2024medusa,li2024eagle2,
li2025eagle3,chen2024sequoia}.
Medusa~\cite{cai2024medusa} and EAGLE-3~\cite{li2025eagle3} generate branches
with learned predictors, while Sequoia~\cite{chen2024sequoia},
EAGLE-2~\cite{li2024eagle2}, Yggdrasil~\cite{guan2025yggdrasil}, and
GTO~\cite{hu2026gto} optimize static or probability-aware tree topologies.
SpecInfer~\cite{miao2024specinfer} and DeFT~\cite{yao2025deft} reduce
target-side verification cost. These works do not transfer to recurrent linear
attention, which exposes no attention matrix~\cite{
miao2024specinfer,yao2025deft}. AdaServe~\cite{li2026adaserve} allocates
proposal budgets across requests to meet SLO requirements.

\runinhead{Hybrid-Attention Model Serving.}
vLLM~\cite{kwon2023efficient}, SGLang~\cite{zheng2024sglang}, and related
systems optimize batching, caching, and GPU execution~\cite{amey2024sarathiserve,
patel2025splitwise,kan2025nanoflow,ruoyu2025mooncake,ye2025flashinfer,
kamath2025podattention}. For hybrid models, Marconi~\cite{pan2025Marconi}
caches recurrent prefixes, while HLX~\cite{Jung2025hlx} and
Pimba~\cite{pim2025Pimba} specialize Transformer--SSM execution.
STree~\cite{wu2025stree} composes diagonal SSM transitions over proposal
trees, but this algebra does not extend to the token-dependent, non-diagonal
recurrence of gated linear attention. These systems do not jointly support
parallel gated-linear tree verification and transient branch-state management~\cite{
yang2025gateddeltanet,kimi2025linear,wu2025stree}. \sys provides both within a
production engine.

\section{Conclusion}

We present \sys, a kernel--runtime co-design for efficient tree
speculative decoding of hybrid-attention LLMs. Its closed-form solver
parallelizes recurrent tree verification, while factorized state management
commits only the accepted branch. A hardware-aware batch-wide budget turns
these gains into serving throughput. Integrated into SGLang, \sys accelerates
linear-attention verification by up to $7.7\times$, reduces transient state
memory by $82$--$99\times$, and outperforms the strongest tree-serving baseline
by up to $2.03\times$. On online agent workloads, it reduces TTFT and TPOT by
up to $67.6\%$ and $49.9\%$, respectively, over the strongest
tree-speculative baseline.

\ifdefined\arxivversion
\section*{Acknowledgments}
This work was supported by Ant Group Research Intern Program.
\else
\ifdefined\hpcacameraready
\section*{Acknowledgments}
This work was supported by Ant Group Research Intern Program.
\fi
\fi


\bibliographystyle{IEEEtranS}
\bibliography{refs}

\end{document}